\documentclass[%
 aip,
 amsmath,amssymb,
reprint, onecolumn %%%%% Comment onecolumn to override one column format %%%%%%%
]{revtex4-1}

\usepackage{graphicx}% Include figure files
\usepackage{dcolumn}% Align table columns on decimal point
\usepackage{bm}% bold math
\usepackage{amsthm} % for theorem/proposition environments
\usepackage{algorithm}
\usepackage{algpseudocode}
\usepackage{longtable}

\usepackage[utf8]{inputenc}
\usepackage[T1]{fontenc}
\usepackage{mathptmx}
\usepackage{etoolbox}
\usepackage{amsthm,amsmath,amssymb}
\usepackage{xcolor} % for \textcolor
\usepackage[hidelinks]{hyperref}

\newcommand{\e}{\mathrm{e}}

\newtheorem{theorem}{Theorem}[section]
\newtheorem{lemma}{Lemma}[section]

\newtheorem{prop}{Proposition}[section]
\newtheorem{remark}{Remark}[section]

\makeatletter
\def\@email#1#2{%
 \endgroup
 \patchcmd{\titleblock@produce}
  {\frontmatter@RRAPformat}
  {\frontmatter@RRAPformat{\produce@RRAP{*#1\href{mailto:#2}{#2}}}\frontmatter@RRAPformat}
  {}{}
}%
\makeatother

\begin{document}

\preprint{AIP/123-QED}

\title[{\color{gray} Isotropic Coordinates for Generalized Schwarzschild-like Solutions}]{Isotropic Coordinates for Generalized Schwarzschild-like Solutions}
\author{Zeyu Zeng}
\affiliation{Department of Physics, The Grainger College of Engineering, University of Illinois Urbana-Champaign, Urbana, IL 61801, USA}
\affiliation{Department of Mathematics, College of Liberal Arts \& Sciences, University of Illinois Urbana-Champaign, Urbana, IL 61801, USA}
\author{Elena Kopteva}
\affiliation{Department of Physics, The Grainger College of Engineering, University of Illinois Urbana-Champaign, Urbana, IL 61801, USA}
\email{koptieva@illinois.edu}

\date{\today}% It is always \today, today,
             %  but any date may be explicitly specified

\begin{abstract}
We consider a broad class of static, spherically symmetric generalized Schwarzschild-like solutions with multiple non-interacting anisotropic fluid sources and derive the coordinate transformation from Schwarzschild-like (curvature) to isotropic coordinates with conformally flat spatial slices. The isotropic form removes spatial-sector coordinate pathologies at the horizon, clarifies geometric quantities (e.g., ADM mass and curvature invariants), and enables the construction of well-posed initial data on $t=const$ hypersurfaces, suitable for the Hamiltonian and conformal formulations of numerical relativity and for perturbation theory. The backgrounds in isotropic coordinates we develop make it straightforward to separate environmental effects from intrinsic strong-gravity signals and meet the growing interest in non-vacuum black hole phenomenology across scattering, lensing, and waveform modeling. 
\end{abstract}

\maketitle
\tableofcontents

%############################################################
%#################### Introduction #########################
%###########################################################

%===============================================================
\section{Introduction}
\label{sec:intro}

Real astrophysical black holes do not live in vacuum.
Matter and fields around a compact object can change observable signals.
These effects can appear in ringdown spectra, gravitational wave phasing, and shadow structure.
This can happen even when the central object is close to Schwarzschild or Reissner--Nordstr\"om (RN) black hole.
Environmental systematics therefore matter for precision tests of strong gravity
\cite{Barausse2014environmental,Cardoso2019nature,Barack2019roadmap}.
Following Visser, we refer to such systems as dirty black holes \cite{Visser1992dirty}.

Cosmology adds another motivation.
Observations support accelerated expansion.
A black hole formed in such a universe interacts with nontrivial ambient energy components.
Kiselev-type metrics provide a common phenomenological description of a static black hole surrounded by an anisotropic effective fluid \cite{Kiselev2003QuintessenceBH}.
The associated stress energy is generically anisotropic and not that of a perfect fluid \cite{Visser2020KiselevNotPerfectFluid}.
General decompositions of the effective stress energy have been studied in detail \cite{Boonserm_2019phw}.
These backgrounds are widely used to study environmental signatures in quasinormal modes, shadows, and strong lensing \cite{Chen2005QNM,Abdujabbarov2017shadow,Younas2015lensing}.

Gravitational wave astronomy has made these questions more concrete.
Environmental structure can produce measurable phase and amplitude changes in waveforms.
Early work showed detectable dephasing from dark matter structures around a black hole \cite{Eda2013DMspike}.
Later studies developed discriminants between different mechanisms, including accretion and other surrounding media \cite{Cole2023environments,Speri2023accretion}.
Recent analyses have placed direct constraints on compact binary environments using LIGO-Virgo data \cite{CanevaSantoro2024first}.
These developments motivate tools that treat environmental effects in a systematic and coordinate transparent way.

For static spherically symmetric metrics, isotropic coordinates provide such a tool.
They make the spatial geometry conformally flat.
They also avoid coordinate blow ups at the horizon that appear in standard curvature coordinates.
This makes horizon regular initial data easier to construct.
It also simplifies many geometric quantities relevant to diagnostics and interpretation.
The same gauge is natural for the initial value formulation and the definition of global charges \cite{arnowitt2008dynamics}.

Isotropic coordinates are also tightly connected to numerical relativity practice.
Modern simulations often use conformal decompositions of the spatial metric in the BSSN framework \cite{shibata1995evolution,baumgarte1999shapiro}.
A common starting point is conformal flatness \cite{baumgarte2010numerical,alcubierre2008introduction}.
This choice underlies standard black hole initial data constructions, including Bowen-York data and puncture methods \cite{bowen1980timeasymmetric,brandt1997simple}.
It is implemented in widely used infrastructure such as the Einstein Toolkit and its puncture initial data solvers \cite{loffler2012einstein,zilhao2013introduction,ansorg2004single}.
Foundational reviews describe this pipeline in detail \cite{cook2000initial,Gourgoulhon2012}.
The first robust binary black hole merger simulations relied on this conformal infrastructure \cite{Campanelli2006,Baker2006}.

Beyond full numerical evolution, isotropic coordinates appear in several analytic frameworks.
They are the standard gauge used in parametrized post-Newtonian analyses \cite{Will2014confrontation,misner1973gravitation}.
They also play a central role in effective one body constructions \cite{Buonanno1999EOB}.
Close limit perturbation theory links conformally flat initial data to black hole perturbations \cite{Price1994closelimit}.
These links strengthen the case for having reliable isotropic maps for nonvacuum backgrounds.

Closed form isotropic transformations are known in a few classical cases.
Schwarzschild and RN solutions admit algebraic transformations \cite{misner1973gravitation,wald1984general,Weinberg1972,Kokkotas2020,payandeh2013isotropic,Casadio2013ChargedShells}.
The K\"ottler (Schwarzschild--de Sitter) solution is already more complex.
Its isotropic radius can be written in terms of elliptic functions \cite{Solanki:2021vqv}.
Related time dependent isotropic like constructions have also been developed \cite{dennison2017schwarzschild}.
Historical and systematic treatments of the K\"ottler family can be found in standard references \cite{kottler1918physikalischen,griffiths2009exact,stephani2003exact}.

Despite this literature, a constructive isotropic map for a broad class of Schwarzschild like metrics with multiple noninteracting anisotropic sources is not readily available.
Generic Kiselev type models are typically treated in curvature coordinates.
In many applications, isotropic coordinates are mentioned but not developed into a practical construction.
The complexity of the K\"ottler case suggests that general multisource models require new tools.

In this work we address this gap.
We provide a general construction of the isotropic radius for a broad class of static, spherically symmetric Schwarzschild-like metrics written in curvature coordinates.
We verify the construction on standard limiting cases and on representative nonvacuum examples.
We then develop a constructive local inversion procedure that recovers the areal radius as a series in the isotropic radius.
For practical computation, we analyze finite-order truncation and compare the series-based approach with direct numerical inversion.
Further convergence diagnostics, detailed error estimates, and implementation details are collected in the appendices.
These results are designed to provide a practical framework for working in isotropic coordinates in numerical relativity and in analytic studies of perturbations, including settings where regularity at horizons is important.

The paper is organized as follows.
Section~\ref{sec:setup} states the general setup, the parameter range in the generalized Kiselev-type ansatz, and the domain assumptions.
Section~\ref{sec:main} presents the general isotropic-coordinate transformation.
Section~\ref{sec:corollaries} develops corollaries and special cases, including classical limits and representative nonvacuum examples.
Section~\ref{SecLagInv} develops the constructive local inversion procedure for recovering $r(\rho)$ from $\rho(r)$ and discusses convergence of the inverse series.
Section~\ref{sec:error_analysis_double_truncation} treats finite-order truncation in practical computation and compares the series-based inversion strategy with numerical root finding.
Section~\ref{sec:discussion} discusses geometric interpretation, parameter dependence, applications, implementation, and scope.
Technical derivations, sharper convergence diagnostics, detailed error decomposition, numerical estimates, and pseudocode are collected in the appendices; principal notation is summarized in Appendix~\ref{app:notation}.

%############################################################
%#################### General Setup ########################
%###########################################################

%===============================================================
\section{Setup and Conventions}\label{sec:setup}

\subsection{General setup}

We work in geometric units with $c = 1$ and $8\pi G = 1$, and adopt the mostly-plus signature $(-,+,+,+)$. The Einstein equations follow from the Einstein-Hilbert action
\begin{equation}
S = \frac{1}{2} \int d^4x \, \sqrt{-g} \, R + S_{\text{matter}},
\end{equation}
and read
\begin{equation}\label{EE}
R^{\mu}{}_{\nu} - \frac{1}{2} \delta^{\mu}{}_{\nu} R = T^{\mu}{}_{\nu}.
\end{equation}

We consider static, spherically symmetric spacetimes with line element
\begin{equation}\label{ds}
    \mathrm{d}s^{2} = -f(r)\,\mathrm{d}t^{2} + f(r)^{-1}\,\mathrm{d}r^{2} + r^{2}\,\mathrm{d}\Omega^{2},
\end{equation}
where $\mathrm{d}\Omega^{2}$ is the metric on the unit 2-sphere and $f(r)$ is to be determined. When no confusion can arise, we will omit the argument and simply write $f$ instead of $f(r)$. For the metric ansatz \eqref{ds}, the Einstein equations are consistent with a diagonal stress--energy tensor of the form $T^{\mu}{}_{\nu}=\mathrm{diag}(-\varepsilon,\,p_r,\,p_\perp,\,p_\perp)$, where $\varepsilon$ is the energy density, and $p_r$ and $p_\perp$ are the radial and tangential pressures.

We assume that the surrounding matter is a superposition of noninteracting, anisotropic fluids, so that the total stress--energy tensor is a linear sum of individual ones (labeled by subscript $i$)
\begin{equation}\label{TsumTi}
    T^{\mu}{}_{\nu} = \sum_i T^{\mu}{}_{\nu (i)},
\end{equation}
each obeying its own conservation law $T^{\mu}{}_{\nu(i);\mu} = 0$.
For each component we adopt an effective barotropic relation in terms of the average pressure,
\begin{equation}\label{StateEq}
    \frac{p_{r(i)} + 2 p_{\perp(i)}}{3} = w_i \varepsilon_i,
\end{equation}
with constant equation-of-state parameter $w_i$.

With this setup, the metric function takes the generalized Kiselev-type form
\begin{equation}\label{f(r)}
f(r)=1-\frac{r_g}{r}-\sum_{i=1}^{N}\frac{K_i}{r^{\,3w_i+1}}.
\end{equation}
where $K_i$ and $w_i$ are, respectively, the dimensional normalization constant and the equation-of-state parameter of the $i$th matter source, and $r_g = 2M$ is the gravitational radius, where $M$ is the black hole mass parameter. In general, both $K_i$ and $w_i$ may take either sign, and $N$ is finite. The value $w_i=0$ is degenerate with the Schwarzschild sector, since then $3w_i+1=1$ and the corresponding contribution scales as $r^{-1}$; throughout, such terms are absorbed into a redefinition of $r_g$ rather than treated as separate matter components. 

Since $f(r)$ is a finite sum of distinct power-law terms, its positive-root structure is sharply constrained: if $m$ denotes the number of distinct exponents after merging coincident values, then the equation $f(r)=0$ has at most $m-1$ positive roots, counted with multiplicity. A proof of this root bound is given in Appendix~\ref{app:rootbound}. In particular, the finiteness of the set of positive roots of $f$ will be used below to specify the physically allowed exterior static region.

\subsection{Parameter range in the generalized Kiselev-type ansatz}

The admissible range of the constants $(K_i,w_i)$ in \eqref{f(r)} is not determined by \eqref{StateEq} alone. It depends on which physical requirements are imposed, and in particular on whether they are enforced \emph{componentwise}, for each individual contribution $T^{\mu}{}_{\nu(i)}$, or only for the \emph{total} stress--energy tensor $T^{\mu}{}_{\nu}$ in \eqref{TsumTi}.

A standard requirement is the dominant energy condition (DEC). For a Hawking--Ellis Type-I tensor $T^{\mu}{}_{\nu}=\mathrm{diag}(-\varepsilon,p_r,p_\perp,p_\perp),$
the DEC requires
\begin{equation}\label{eq:DEC}
\varepsilon \ge 0,
\qquad
\varepsilon \ge |p_r|,
\qquad
\varepsilon \ge |p_\perp|.
\end{equation}
See, e.g., \cite{HawkingEllis1973,wald1984general}.

If the DEC is imposed \emph{componentwise} on each $T^{\mu}{}_{\nu(i)}$, then the Kiselev principal-pressure relations lead to the conservative range
\begin{equation}\label{eq:param1}
-1 \le w_i \le \frac{1}{3},
\end{equation}
see \cite{Visser2020KiselevNotPerfectFluid}.

If instead the DEC is imposed only on the \emph{total} tensor $T^{\mu}{}_{\nu}$ on the domain of interest, then no universal radius-independent bound on each individual $w_i$ follows in general. Rather, admissibility becomes a property of the summed stress--energy tensor and may depend on radius when different components scale differently.

As noted above, the case $w_i=0$ is absorbed into the Schwarzschild sector, so we do not retain it as a separate component in what follows.
See~\cite{Kopteva:AnisotropicBH} for a detailed re-derivation of the Kiselev metric, a discussion of its properties, and a broader list of applications.

\subsection{Assumptions and Domain}\label{sec:domain}

Assume that $f$ admits an exterior static region, that is, a connected interval on which $f>0$ and whose lower endpoint is a positive root of $f$. We denote this lower endpoint by $r_+$ and refer to it as the event horizon. We denote by $r_\uparrow$ the upper endpoint of this exterior static region: $r_\uparrow=+\infty$ if $f>0$ extends to infinity (asymptotically flat/AdS case), and $r_\uparrow<+\infty$ if the static patch is bounded above by a cosmological horizon. Define
\begin{equation}
\mathcal{S}:=(r_+, r_\uparrow).
\end{equation}

We assume:
\begin{itemize}
\item[(A1)] $f$ is $C^\infty$ on $\mathcal{S}$, with $f>0$ on $\mathcal{S}$, and any finite endpoint of $\mathcal{S}$, namely $r_+$ and, when finite, $r_\uparrow$, is a simple zero of $f$.
\item[(A2)] $r$ is the areal radius on $\mathcal{S}$.
\item[(A3)] For every $r\in\mathcal{S}$, the function $x\mapsto \frac{1}{x}\bigl(\frac{1}{\sqrt{f(x)}}-1\bigr)$ is integrable on $(r_+,r]$.
\end{itemize}

%############################################################
%################# Main REsult #############################
%###########################################################

\section{General Isotropic-Coordinate Transformation}\label{sec:main}

\begin{theorem}[Main Result]\label{thm:main}
Under assumptions (A1)--(A3) and the domain definitions specified in \S\ref{sec:domain}, any static spherically symmetric metric of the form
\begin{equation}
 \mathrm{d}s^2 = -f(r)\mathrm{d}t^2 + f^{-1}(r)\mathrm{d}r^2 + r^2 \mathrm{d}\Omega^2
\end{equation}
\textit{admits an isotropic representation}
\begin{equation}
\mathrm{d}s^2 = -F(\rho)\mathrm{d}t^2 + \Phi(\rho)^4 \bigl(\mathrm{d}\rho^2 + \rho^2 \mathrm{d}\Omega^2\bigr),
\end{equation}
where the isotropic radius is defined implicitly by
\begin{equation}\label{expKiselev}
    \rho(r) = \rho_0 r \e^{I(r)},
\end{equation}
with
\begin{equation}\label{eq:I(r)}
I(r) = \int_{r_+}^{r} \frac{dx}{x}\!\left(\frac{1}{\sqrt{f(x)}} - 1\right),
\qquad r \in (r_+, r_\uparrow).
\end{equation}
Here \(F(\rho)=f(r(\rho))\) and \(\Phi(\rho)^2=r(\rho)/\rho\).

Moreover, \(\rho\) is strictly increasing on \(\mathcal{S}=(r_+,r_\uparrow)\), and therefore admits a unique inverse \(r(\rho)\) on \(\rho(\mathcal{S})\).
\end{theorem}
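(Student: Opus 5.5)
We look for a radial reparametrization $r\mapsto\rho(r)$ that makes the spatial metric conformally flat. Matching the spatial part of the line element to $\Phi^4(\mathrm{d}\rho^2+\rho^2\mathrm{d}\Omega^2)$ gives two conditions. The angular part requires $r^2=\Phi^4\rho^2$, i.e.\ $\Phi^2=r/\rho$. The radial part requires $f^{-1}\mathrm{d}r^2=\Phi^4\,\mathrm{d}\rho^2$. Eliminating $\Phi$ turns this into $\mathrm{d}r^2/f=(r^2/\rho^2)\,\mathrm{d}\rho^2$. Choosing the orientation in which $\rho$ increases with $r$, we obtain the separable ODE
\begin{equation*}
\frac{\mathrm{d}\rho}{\rho}=\frac{\mathrm{d}r}{r\sqrt{f(r)}} .
\end{equation*}

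To solve it, we write $\frac{1}{r\sqrt f}=\frac1r+\frac1r\bigl(\frac{1}{\sqrt f}-1\bigr)$ and integrate from $r_+$. This subtraction is what makes the integral well defined. The $1/r$ piece integrates to $\ln r$. The remainder is integrable on $(r_+,r]$ by (A3). Together these give $\ln\rho=\ln r+I(r)+\text{const}$, that is, $\rho=\rho_0 r\,\e^{I(r)}$ as in \eqref{expKiselev}, with $\rho_0>0$ a free normalization.

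We then verify the construction directly. By (A1), $f$ is $C^\infty$ and positive on $\mathcal S$, so the integrand of \eqref{eq:I(r)} is smooth on $\mathcal S$. Integrability near $r_+$ comes from (A3). Since $r_+$ is a simple zero, we also have $1/\sqrt f\sim (f'(r_+)(x-r_+))^{-1/2}$, which is integrable, so (A3) is consistent with (A1). Hence $I$ is $C^1$ (indeed $C^\infty$) on $\mathcal S$, and the fundamental theorem of calculus gives
\begin{equation*}
\rho'(r)=\rho_0\,\e^{I}\bigl(1+r I'(r)\bigr)=\rho_0\,\e^{I}\frac{1}{\sqrt{f}}=\frac{\rho}{r\sqrt f}>0 .
\end{equation*}
Substituting $\mathrm{d}r=(r\sqrt f/\rho)\,\mathrm{d}\rho$ and $\Phi^2=r/\rho$ back into \eqref{ds} recovers
\begin{equation*}
f^{-1}\mathrm{d}r^2+r^2\mathrm{d}\Omega^2=\frac{r^2}{\rho^2}\bigl(\mathrm{d}\rho^2+\rho^2\mathrm{d}\Omega^2\bigr)=\Phi^4\bigl(\mathrm{d}\rho^2+\rho^2\mathrm{d}\Omega^2\bigr).
\end{equation*}
The time part becomes $F(\rho)=f(r(\rho))$ once the inverse is available. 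Assumption (A2) is used only to identify $r^2\mathrm{d}\Omega^2$ as the areal sphere metric, so that $\Phi^2=r/\rho$ is the correct conformal factor.

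Finally, $\rho'>0$ on the interval $\mathcal S$ means $\rho$ is strictly increasing and continuous. It is therefore a bijection onto the interval $\rho(\mathcal S)$, and by the inverse function theorem its inverse $r(\rho)$ is $C^\infty$, with $F$ and $\Phi$ well defined.

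The main obstacle is the behaviour at the endpoint $r_+$, namely justifying that $I$ is finite there and that the lower limit $r_+$ is legitimate. This is handled by (A3) together with the simple-zero expansion. A related subtlety is that for $r_\uparrow=\infty$ the value $\rho(\mathcal S)$ may be unbounded and $I$ need not converge as $r\to\infty$. This does not affect the local statement, because only the finiteness of $I(r)$ for each $r\in\mathcal S$ is needed.
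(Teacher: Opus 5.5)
Your proposal is correct and follows essentially the same route as the paper. Both match the angular and radial parts to get $\Phi^2=r/\rho$ and the separable ODE $\mathrm{d}\rho/\rho=\mathrm{d}r/(r\sqrt{f})$, add and subtract $1/x$ and integrate from $r_+$, and read off strict monotonicity from $\rho'=\rho/(r\sqrt{f})>0$. Your explicit back-substitution check and the inverse-function-theorem remark on smoothness of $r(\rho)$ go slightly beyond the paper's sketch, which only derives the necessary condition.
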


\begin{proof}\label{sec:proof}
We sketch a compact derivation of Eq.~\eqref{expKiselev} and the monotonicity
statement, postponing algebraic details to the Appendix. We start from a static, spherically symmetric line element in curvature
coordinates,
\begin{equation}
  \mathrm{d}s^2
  = -f(r)\,\mathrm{d}t^2 + f(r)^{-1}\,\mathrm{d}r^2 + r^2\,\mathrm{d}\Omega^2,
  \label{eq:metric-curvature-proof}
\end{equation}
and seek a purely radial coordinate transformation
\begin{equation}
  \Psi : (t,r,\theta,\phi)\mapsto (t,\rho,\theta,\phi),
  \qquad \rho=\rho(r),
\end{equation}
such that the metric takes the isotropic form
\begin{equation}
  \mathrm{d}s^2
  = -F(\rho)\,\mathrm{d}t^2
    + \Phi(\rho)^4\bigl(\mathrm{d}\rho^2+\rho^2\,\mathrm{d}\Omega^2\bigr).
  \label{eq:metric-isotropic-proof}
\end{equation}
Time and angular coordinates are unchanged, so the two metrics must agree term
by term.

Comparing the angular parts of Eqs.~\eqref{eq:metric-curvature-proof} and
\eqref{eq:metric-isotropic-proof} gives
\begin{equation}
  r^2\,\mathrm{d}\Omega^2
  = \Phi(\rho)^4\,\rho^2\,\mathrm{d}\Omega^2
  \quad\Longrightarrow\quad
  \Phi(\rho)^4 = \Bigl(\frac{r}{\rho}\Bigr)^2,
\end{equation}
so that
\begin{equation}
  \Phi(\rho)^2 = \frac{r}{\rho}.
  \label{eq:Phi-from-rho}
\end{equation}
In particular, once \(\rho(r)\) is known, \(\Phi(\rho)\) is determined algebraically. Matching the radial terms in
Eqs.~\eqref{eq:metric-curvature-proof}--\eqref{eq:metric-isotropic-proof},
using \eqref{eq:Phi-from-rho}, yields
\begin{equation}
  f(r)^{-1}\,\mathrm{d}r^2
  = \Phi(\rho)^4\,\mathrm{d}\rho^2
  = \Bigl(\frac{r}{\rho}\Bigr)^2 \mathrm{d}\rho^2.
\end{equation}
Taking square roots, with the positive sign on the exterior static domain \(\mathcal{S}\), we obtain
\begin{equation}
  \frac{\mathrm{d}r}{\mathrm{d}\rho}
  = \frac{r}{\rho}\sqrt{f(r)}
  \quad\Longleftrightarrow\quad
  \frac{1}{\rho}\frac{\mathrm{d}\rho}{\mathrm{d}r}
  = \frac{1}{r\sqrt{f(r)}}.
\end{equation}
Thus \(\rho(r)\) satisfies the first-order equation
\begin{equation}
  \frac{\mathrm{d}\rho}{\mathrm{d}r}
  = \frac{\rho}{r\sqrt{f(r)}},
  \qquad r\in\mathcal{S},
  \label{eq:drho-dr}
\end{equation}
which is the equation familiar from known explicit examples such as Schwarzschild, RN, and K\"ottler~\cite{misner1973gravitation,wald1984general,payandeh2013isotropic}. Equation~\eqref{eq:drho-dr} is separable:
\begin{equation}\label{eq:d_ln_rho}
  \frac{\mathrm{d}\rho}{\rho}
  = \frac{\mathrm{d}r}{r\sqrt{f(r)}}.
\end{equation}
Fix the lower integration limit at \(r_+\) and set
\(\rho(r_+)=\rho_+>0\). Since $r_+$ is a simple zero of $f$, the integrand in \eqref{eq:I(r)} is integrable at the lower endpoint, so the normalization at $r_+$ is well defined. Integrating from \(r_+\) to \(r\) gives
\begin{equation}
  \ln\frac{\rho(r)}{\rho_+}
  = \int_{r_+}^{r}\frac{\mathrm{d}x}{x\sqrt{f(x)}}.
  \label{eq:rho-integral-1}
\end{equation}
To separate off the reference logarithmic term corresponding to the flat radial scaling $\rho\propto r$, add and subtract $1/x$ in the integrand:
\begin{equation}
  \frac{1}{x\sqrt{f(x)}}
  = \frac{1}{x}\Bigl(\frac{1}{\sqrt{f(x)}}-1\Bigr)
    + \frac{1}{x}.
\end{equation}
Substituting this into \eqref{eq:rho-integral-1} yields
\begin{equation}
  \ln\frac{\rho(r)}{\rho_+}
  = \int_{r_+}^{r}\frac{\mathrm{d}x}{x}\Bigl(\frac{1}{\sqrt{f(x)}}-1\Bigr)
    + \int_{r_+}^{r}\frac{\mathrm{d}x}{x}
  = \int_{r_+}^{r}\frac{\mathrm{d}x}{x}\Bigl(\frac{1}{\sqrt{f(x)}}-1\Bigr)
    + \ln\frac{r}{r_+}.
\end{equation}
Exponentiating and absorbing \(r_+\) into a redefined constant
\(\rho_0=\rho_+ /r_+\) gives
\begin{equation}
  \rho(r)
  = \rho_0\,r\,
    \exp\!\Biggl[
      \int_{r_+}^{r}\frac{\mathrm{d}x}{x}
      \Bigl(\frac{1}{\sqrt{f(x)}}-1\Bigr)
    \Biggr],
\end{equation}
which is precisely Eq.~\eqref{expKiselev}.

Finally, on \(\mathcal{S}\) one has \(r>0\), \(f(r)>0\), and \(\rho(r)>0\), so Eq.~\eqref{eq:drho-dr} implies
\begin{equation}
\frac{\mathrm{d}\rho}{\mathrm{d}r}
=
\frac{\rho}{r\sqrt{f(r)}}>0.
\end{equation}
Therefore \(\rho\) is strictly increasing on \(\mathcal{S}\), and hence admits a unique inverse \(r(\rho)\) on \(\rho(\mathcal{S})\).
\end{proof}

\begin{remark}
Equations~\eqref{expKiselev} and~\eqref{eq:I(r)} are the primary object of the theorem: we produce \(\rho(r)\) in closed integral form. 
Section~\ref{SecLagInv} gives a constructive Lagrange inversion method to obtain \(r(\rho)\) when needed.
\end{remark}

\begin{prop}[Multi-index expansion for generalized Kiselev-type backgrounds]
\label{prop:multiindex}
If, in addition, the metric function is of generalized Kiselev type \eqref{f(r)},
\begin{equation}
f(r)=1-\frac{r_g}{r}-\sum_{i=1}^{N}\frac{K_i}{r^{\,3w_i+1}}, \quad w_i \neq 0
\end{equation}
then, writing
\begin{equation}
s_i:=3w_i+1,\qquad
\xi_0(r)=\frac{r_g}{r},\qquad
\xi_i(r)=\frac{K_i}{r^{s_i}}\ \ (i=1,\dots,N),
\end{equation}
so that
\begin{equation}
f(r)=1-\xi(r),\qquad \xi(r)=\sum_{i=0}^{N}\xi_i(r),
\end{equation}
the integral $I(r)$ in Eq.~\eqref{eq:I(r)} admits the following formal expansion
\begin{equation}
I(r)
= \sum_{\substack{\mathbf{m}\neq \mathbf 0}}
C(\mathbf m)\, r_g^{m_0}\prod_{i=1}^{N} K_i^{m_i}
\int_{r_+}^{r} \mathrm{d} x\, x^{-1-\alpha(\mathbf m)},
\qquad r\in(r_+,r_\uparrow),
\label{eq:I-series-general}
\end{equation}
where \,  \(\mathbf m=(m_0,m_1,\dots,m_N)\in \mathbb N_0^{\,N+1}\), \,
\(|\mathbf m|=m_0+m_1+\cdots+m_N\ge 1\), \,  and
\begin{equation}
C(\mathbf m)
=
\frac{(2|\mathbf m|)!}{4^{|\mathbf m|} (|\mathbf m|!)^2}
\frac{|\mathbf m|!}{\prod_{i=0}^{N} (m_i !)},
\qquad
\alpha(\mathbf m)=m_0+\sum_{i=1}^{N} m_i s_i.
\end{equation}
\end{prop}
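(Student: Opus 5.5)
The plan is to expand the integrand of $I(r)$ as a power series in $\xi=1-f$, then apply the multinomial theorem to $\xi^n$, and finally integrate term by term. First, since $f=1-\xi$, we have on $\mathcal{S}$
\begin{equation}
\frac{1}{\sqrt{f}}-1=(1-\xi)^{-1/2}-1=\sum_{n=1}^{\infty}\binom{-1/2}{n}(-\xi)^n=\sum_{n=1}^{\infty}\frac{(2n)!}{4^n (n!)^2}\,\xi^n ,
\end{equation}
using the standard identity $(-1)^n\binom{-1/2}{n}=\binom{2n}{n}4^{-n}$. This produces the first factor of $C(\mathbf m)$ with $n=|\mathbf m|$. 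The series converges absolutely whenever $|\xi(x)|<1$. Because the statement only claims a \emph{formal} expansion, I will treat this as an identity of formal power series in the parameters $(r_g,K_1,\dots,K_N)$ and add a remark about when it holds analytically.

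Next, write $\xi^n=\bigl(\sum_{i=0}^N\xi_i\bigr)^n$ and apply the multinomial theorem:
\begin{equation}
\xi^n=\sum_{|\mathbf m|=n}\frac{n!}{\prod_{i=0}^N m_i!}\prod_{i=0}^N\xi_i^{m_i}
=\sum_{|\mathbf m|=n}\frac{n!}{\prod_{i} m_i!}\,r_g^{m_0}\prod_{i=1}^N K_i^{m_i}\,x^{-m_0-\sum_i m_i s_i}.
\end{equation}
This gives the multinomial factor in $C(\mathbf m)$ and the power $x^{-\alpha(\mathbf m)}$. Combining it with the prefactor $1/x$ from $I$ yields the integrand $x^{-1-\alpha(\mathbf m)}$. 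Regrouping the double sum over $n\ge1$ and $|\mathbf m|=n$ as a single sum over $\mathbf m\neq\mathbf 0$, and exchanging sum and integral, reproduces Eq.~\eqref{eq:I-series-general}.

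The main obstacle is the exchange of summation and integration, together with the lower limit $r_+$. At $r_+$ we have $\xi(r_+)=1$, so the binomial series sits exactly on its boundary of convergence there. Every coefficient is positive, but the $\xi_i$ may have either sign. For the formal statement, interchange is trivially legitimate coefficientwise, since each monomial in the parameters receives a finite contribution. For an analytic justification I would try two steps:
\begin{itemize}
\item[(i)] Show that $\sum_{\mathbf m}|C(\mathbf m)|\prod_i|\xi_i(x)|^{m_i}=(1-\sum_i|\xi_i(x)|)^{-1/2}-1$. This is finite wherever $\sum_i|\xi_i(x)|<1$, so Fubini/Tonelli applies on any subinterval where that holds.
\item[(ii)] Near $r_+$, handle the borderline case $\xi\to1^-$ (true when all $\xi_i\ge0$) by monotone convergence. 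Then use (A3), which guarantees that the integrand is integrable on $(r_+,r]$, so the partial sums converge to $I(r)$.
\end{itemize}
When some terms are negative and $\sum_i|\xi_i|\ge1$ somewhere, I would state that the expansion holds only as a formal series, consistent with the wording of the proposition.
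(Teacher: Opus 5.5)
Your proposal is correct and follows the paper's proof step for step. Both expand $(1-\xi)^{-1/2}$ by the binomial series, apply the multinomial theorem to $\xi^k$, identify $\prod_i\xi_i^{m_i}=r_g^{m_0}\prod_i K_i^{m_i}x^{-\alpha(\mathbf m)}$, and integrate term by term, with the paper doing the sum--integral interchange purely formally. Your convergence remarks therefore go beyond the paper, but one is more pessimistic than needed: grouping by $k=|\mathbf m|$ and dominating the partial sums by $(1-|\xi|)^{-1/2}$, which is integrable near $r_+$ because the zero is simple, justifies the interchange whenever $-1<\xi<1$ on $(r_+,r]$, and this covers mixed-sign cases such as RN that fail your absolute test $\sum_i|\xi_i|<1$.
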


\begin{proof}
To derive the formal multi-index expansion of Eq.~\eqref{eq:I(r)}, we start from
\begin{equation}
I(r)
=
\int_{r_+}^r \frac{\mathrm{d} x}{x}\left(\frac{1}{\sqrt{f(x)}}-1\right).
\end{equation}
Using \(f(x)=1-\xi(x)\), Taylor's expansion gives
\begin{equation}\label{expansion1}
    \frac{1}{\sqrt{f(x)}} = \frac{1}{\sqrt{1 - \xi(x)}} = \sum_{k = 0}^\infty \binom{2k}{k} \frac{\xi(x)^k}{4^k}
    = \sum_{k = 0}^{\infty} \frac{(2k)!}{4^k (k!)^2}\xi(x)^k.
\end{equation}
As a generalization of Newton's binomial theorem, we expand \(\xi(x)^k\) using the multinomial theorem:
\begin{equation}
    (x_0 + x_1 + \cdots + x_N)^k
    =
    \sum_{\substack{m_0 + m_1 + \cdots + m_N = k \\ m_0, m_1, \dots, m_N \geq 0}}
    \frac{k!}{m_0! \cdots m_N!} x_0^{m_0} \cdots x_N^{m_N}.
\end{equation}
Thus, introducing the multi-index \(\mathbf m=(m_0,m_1,\dots,m_N)\), \(|\mathbf m|=m_0+m_1+\cdots+m_N=k\), we write
\begin{equation}
\xi(x)^k
=
(\xi_0 + \cdots + \xi_N)^k
=
\sum_{|\mathbf m|=k}
\frac{|\mathbf m|!}{\prod_{i=0}^N (m_i !)}
\prod_{i = 0}^N \xi_i(x)^{m_i}.
\end{equation}
By the definition of \(\xi_i(x)\), we also have
\begin{equation}
\prod_{i=0}^N \xi_i(x)^{m_i}
=
r_g^{m_0}\,x^{-m_0-\sum_{i=1}^N m_i s_i}\,\prod_{i=1}^N K_i^{m_i}
=
r_g^{m_0}\,\prod_{i=1}^N K_i^{m_i}\,x^{-\alpha(\mathbf m)}.
\end{equation}
Substituting the result above into Eq.~\eqref{expansion1}, we arrive at
\begin{equation}
\frac{1}{\sqrt{f(x)}} - 1
=
\sum_{\mathbf{m}\neq \mathbf 0}
\frac{(2|\mathbf m|)!}{4^{|\mathbf m|} (|\mathbf m|!)^2}
\frac{|\mathbf m|!}{\prod_{i=0}^N (m_i !)}
\,r_g^{m_0}\prod_{i=1}^N K_i^{m_i}\,
x^{-\alpha(\mathbf m)}.
\end{equation}
Defining
\begin{equation}
C(\mathbf m)
=
\frac{(2|\mathbf m|)!}{4^{|\mathbf m|} (|\mathbf m|!)^2}
\frac{|\mathbf m|!}{\prod_{i=0}^N (m_i !)},
\qquad
\alpha(\mathbf m)=m_0+\sum_{i=1}^{N} m_i s_i,
\end{equation}
and substituting into Eq.~\eqref{eq:I(r)}, the integral can be written as
\begin{equation}
I(r)
=
\sum_{\substack{\mathbf{m}\neq \mathbf 0}}
C(\mathbf m)\, r_g^{m_0}\prod_{i=1}^{N} K_i^{m_i}
\int_{r_+}^{r} \mathrm{d} x\, x^{-1-\alpha(\mathbf m)},
\qquad r\in(r_+,r_\uparrow),
\end{equation}
which is precisely Eq.~\eqref{eq:I-series-general}.
\end{proof}

%############################################################
%####### Corollaries and Special Cases #####################
%###########################################################

\section{Corollaries and Special Cases}\label{sec:corollaries}

%===============================================================
\subsection{Schwarzschild limit}\label{sec:schwarzschild_example}

As a first nontrivial check of Theorem~\ref{thm:main}, we consider the Schwarzschild metric, for which $r_+=r_g$, $r_\uparrow=\infty$, and hence $\mathcal{S}=(r_g,\infty)$.
\begin{equation}
  \mathrm{d}s^2
  = -\Bigl(1 - \frac{r_g}{r}\Bigr)\mathrm{d}t^2
    + \Bigl(1 - \frac{r_g}{r}\Bigr)^{-1}\mathrm{d}r^2
    + r^2 \mathrm{d}\Omega^2.
  \label{eq:Sch-curvature}
\end{equation}
The standard isotropic form of the Schwarzschild metric is well known~\cite{Kokkotas2020},
\begin{equation}
  \mathrm{d}s^2
  = -\left(\frac{1 - \dfrac{r_g}{4\rho}}
                 {1 + \dfrac{r_g}{4\rho}}\right)^2 \mathrm{d}t^2
    + \left(1 + \frac{r_g}{4\rho}\right)^4
      \bigl(\mathrm{d}\rho^2+\rho^2\mathrm{d}\Omega^2\bigr),
  \label{eq:Sch-isotropic-standard}
\end{equation}
with the curvature and isotropic radii related by
\begin{equation}
  r(\rho) = \rho\left(1 + \frac{r_g}{4\rho}\right)^2,
  \qquad
  \rho \in \Bigl(\frac{r_g}{4},\infty\Bigr).
  \label{eq:Sch-rho-standard}
\end{equation}
We now show that this relation follows from Eqs.~\eqref{expKiselev} and~\eqref{eq:I-series-general}.

For Schwarzschild we have $f(r) = 1 - r_g/r$.
In the notation of Proposition~\ref{prop:multiindex}, there is a single nonzero term $\xi_0(r) = r_g/r$, and all $K_i$ vanish. The multi-index reduces to $\mathbf{m}=(k,0,\dots,0)$ with $|\mathbf{m}|=k$, and
\begin{equation}
  C(\mathbf{m}) = C_k
  := \frac{(2k)!}{4^k (k!)^2}, \qquad \alpha(\mathbf{m}) = k.
\end{equation}
Expanding the integrand in Eq.~\eqref{eq:I(r)} by the binomial series, we write on the Schwarzschild exterior domain $\mathcal{S}=(r_g,\infty)$
\begin{equation}
  I_{\mathrm{Sch}}(r)
  = \int_{r_g}^{r}\frac{\mathrm{d}x}{x}
    \Bigl(\frac{1}{\sqrt{1-r_g/x}} - 1\Bigr)
  = \sum_{k=1}^{\infty} C_k r_g^{k}
    \int_{r_g}^{r} x^{-1-k}\,\mathrm{d}x .
  \label{eq:Sch-I-start}
\end{equation}
For $x>r_g$ the binomial series is valid termwise, and the resulting closed form extends continuously to the lower endpoint $x=r_g$.
The elementary integral needed here is
\begin{equation}
  \int_{r_g}^{r} x^{-1-k}\,\mathrm{d}x
  = -\frac{1}{k}\bigl(r^{-k} - r_g^{-k}\bigr),\quad k\ge1.
\end{equation}
Therefore,
\begin{equation}
  I_{\mathrm{Sch}}(r)
  = \sum_{k=1}^{\infty} C_k r_g^{k}
    \left[-\frac{1}{k}\bigl(r^{-k} - r_g^{-k}\bigr)\right]
  = \sum_{k=1}^{\infty}\frac{C_k}{k}\Bigl(1 - (r_g/r)^k\Bigr).
  \label{eq:Sch-I-series}
\end{equation}
Introduce $t:=r_g/r\in(0,1)$ and define
\begin{equation}
  S(t)
  := \sum_{k=1}^{\infty}\frac{C_k}{k} t^k
   = \sum_{k=1}^{\infty}\binom{2k}{k}\frac{t^k}{4^k k}.
  \label{eq:Sch-S-def}
\end{equation}
Then Eq.~\eqref{eq:Sch-I-series} can be written compactly as
\begin{equation}
  I_{\mathrm{Sch}}(r)
  = S(1) - S\!\left(\frac{r_g}{r}\right).
  \label{eq:Sch-I-S}
\end{equation}

\begin{prop}\label{prop:series-identity}
  For $t\in(-1,1)$ the series
  \begin{equation}
    S(t)
    = \sum_{k=1}^{\infty}\binom{2k}{k}\frac{t^k}{4^k k}
  \end{equation}
  converges and admits the closed form
  \begin{equation}\label{series-1}
    S(t)
    = -2\ln\!\left(\frac{1+\sqrt{1-t}}{2}\right).
  \end{equation}
  Moreover, the series also converges at $t=1$, with
  \begin{equation}
    S(1)=2\ln 2.
  \end{equation}
\end{prop}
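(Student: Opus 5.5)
The plan is to identify $S$ through its derivative, using the binomial series \eqref{expansion1} already invoked in the proof of Proposition~\ref{prop:multiindex}. For $|t|<1$ we have $\binom{2k}{k}4^{-k}\sim 1/\sqrt{\pi k}$, so the coefficients of $S$ are $O(k^{-3/2})$. Hence the radius of convergence is $1$, and $S$ converges absolutely on $(-1,1)$. Moreover, since $\sum_k k^{-3/2}<\infty$, the series also converges absolutely at $t=1$, and by the Weierstrass M-test it converges uniformly on $[-1,1]$. This uniform convergence will later justify passing to the limit $t\to1^-$.

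Next, differentiate termwise inside the disk of convergence:
\[
  S'(t)=\sum_{k=1}^{\infty}\binom{2k}{k}\frac{t^{k-1}}{4^k}
  =\frac{1}{t}\Bigl(\frac{1}{\sqrt{1-t}}-1\Bigr),
\]
again using \eqref{expansion1} with $\xi=t$. Then compute the derivative of the candidate $G(t)=-2\ln\bigl((1+\sqrt{1-t})/2\bigr)$:
\[
  G'(t)=\frac{1}{\sqrt{1-t}\,(1+\sqrt{1-t})}.
\]
To compare the two expressions, multiply numerator and denominator of $\frac{1}{t}\bigl(\frac{1-\sqrt{1-t}}{\sqrt{1-t}}\bigr)$ by $1+\sqrt{1-t}$. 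Since $1-(1-t)=t$, this gives exactly $G'(t)$, including at $t=0$ by continuity. Because $S(0)=0=G(0)$, uniqueness for the initial value problem (or the fundamental theorem of calculus on the interval between $0$ and $t$) yields $S=G$ on $(-1,1)$.

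For the endpoint, I would invoke Abel's theorem, or the uniform convergence established above, to conclude that $S$ is continuous on $[0,1]$. Then
\[
  S(1)=\lim_{t\to1^-}G(t)=-2\ln\tfrac12=2\ln 2.
\]
The only delicate step is this endpoint claim. It rests on the asymptotic bound $\binom{2k}{k}4^{-k}\le 1/\sqrt{\pi k}$, or alternatively $\le 1/\sqrt{2k+1}$, which I would justify by the standard Wallis-product or induction estimate $C_{k+1}/C_k=(2k+1)/(2k+2)$. The algebraic identity for the derivatives is routine.
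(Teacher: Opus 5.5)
Your proposal is correct and follows essentially the paper's route: you match $S'$ with the derivative of $-2\ln\bigl((1+\sqrt{1-t})/2\bigr)$, normalize with $S(0)=0$, and pass to the limit $t\to1^-$. The differences are only technical. You differentiate the power series termwise, where the paper goes through the integral representation $S(t)=\int_0^1\bigl[(1-tx)^{-1/2}-1\bigr]x^{-1}\,\mathrm{d}x$. You also secure the endpoint with the $O(k^{-3/2})$ coefficient bound and Abel's theorem, where the paper argues by monotonicity from the nonnegative coefficients (which already gives convergence at $t=1$ once the limit is known to be finite).
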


\begin{proof}
  The ratio test shows that the series converges at least for
  $t\in(-1,1)$.
  Using
  \begin{equation}
    \frac{1}{k} = \int_0^1 x^{k-1}\,\mathrm{d}x,
  \end{equation}
  we rewrite $S(t)$ as
  \begin{equation}
    S(t)
    = \int_0^1
      \left[
        \sum_{k=1}^{\infty}
        \binom{2k}{k}\left(\frac{tx}{4}\right)^k
      \right]\frac{\mathrm{d}x}{x}
    = \int_0^1 \bigl[(1-tx)^{-1/2}-1\bigr]x^{-1}\,\mathrm{d}x,
  \end{equation}
  where we used the binomial series for $(1-z)^{-1/2}$ and exchanged
  sum and integral inside the convergence domain. Differentiating with
  respect to $t$ under the integral sign gives
  \begin{equation}
    S'(t)
    = \frac{1}{2}\int_0^1 (1-tx)^{-3/2}\,\mathrm{d}x
    = \frac{1}{t}\bigl[(1-t)^{-1/2}-1\bigr]
    = \frac{1-\sqrt{1-t}}{t\sqrt{1-t}}.
  \end{equation}
  On the other hand, if we define
  \begin{equation}
    F(t) := -2\ln\left(\frac{1+\sqrt{1-t}}{2}\right),
  \end{equation}
  a straightforward calculation shows
  \begin{equation}
    F'(t)
    = \frac{1-\sqrt{1-t}}{t\sqrt{1-t}} = S'(t).
  \end{equation}
  Since $S(0)=F(0)=0$, it follows that $S(t)=F(t)$ on $(-1,1)$,
  establishing~\eqref{series-1}. Since all coefficients are nonnegative,
  $S(t)$ is increasing on $[0,1)$. Taking the limit $t\to1^{-}$ in
  \eqref{series-1} therefore gives
  \begin{equation}
    S(1)=\lim_{t\to1^-}S(t)=2\ln 2.
  \end{equation}
\end{proof}

Using Proposition~\ref{prop:series-identity}, we have $S(1)=2\ln 2$. For general $t=r_g/r$ we have
\begin{equation}
  S\!\left(\frac{r_g}{r}\right)
  = -2\ln\!\left(\frac{1+\sqrt{1-r_g/r}}{2}\right).
\end{equation}
Substituting into~\eqref{eq:Sch-I-S} yields the closed form
\begin{equation}
  I_{\mathrm{Sch}}(r)
  = 2\ln 2
    + 2\ln\!\left(\frac{1+\sqrt{1-r_g/r}}{2}\right)
  = 2\ln\!\bigl(1+\sqrt{1-r_g/r}\bigr).
  \label{eq:Sch-I-closed}
\end{equation}
The general transformation formula~\eqref{expKiselev} then gives
\begin{equation}
  \rho(r)
  = \rho_0\,r\,\exp\bigl[I_{\mathrm{Sch}}(r)\bigr]
  = \rho_0\,r\bigl(1+\sqrt{1-r_g/r}\bigr)^2
  = \rho_0\bigl(\sqrt{r}+\sqrt{r-r_g}\bigr)^2.
  \label{eq:Sch-rho-from-theorem}
\end{equation}
Setting $\rho_0=\tfrac{1}{4}$ and inverting~\eqref{eq:Sch-rho-from-theorem} gives
\begin{equation}
  r(\rho) = \rho\left(1 + \frac{r_g}{4\rho}\right)^2,
\end{equation}
which is exactly the standard relation~\eqref{eq:Sch-rho-standard}
underlying the isotropic form~\eqref{eq:Sch-isotropic-standard}.

Thus, the standard Schwarzschild isotropic coordinates are recovered from Theorem~\ref{thm:main} together with the series expansion \eqref{eq:I-series-general}.

%===================================================================================

\subsection{Reissner--Nordstr\"om solution}\label{sec:rn_example}

As a second nontrivial check of Theorem~\ref{thm:main} we turn to the
RN metric,
\begin{equation}
  \mathrm{d}s^2_{\mathrm{RN}}
  = -\left(1-\frac{r_g}{r}+\frac{Q^2}{r^2}\right)\mathrm{d}t^2
    +\left(1-\frac{r_g}{r}+\frac{Q^2}{r^2}\right)^{-1}\mathrm{d}r^2
    +r^2\mathrm{d}\Omega^2, \qquad 0<Q^2<\frac{r_g^2}{4}.
  \label{eq:RN-curvature}
\end{equation}
The metric function can be written as
\begin{equation}
  f(r)=1-\frac{r_g}{r}+\frac{Q^2}{r^2}
  =\frac{(r-r_+)(r-r_-)}{r^2},
  \qquad
  r_\pm=\frac{1}{2}\Bigl(r_g\pm\sqrt{r_g^2-4Q^2}\Bigr),
  \label{eq:RN-f-horizons}
\end{equation}
and on the physical exterior region one has
\begin{equation*}
  r_\uparrow=\infty,
  \qquad
  \mathcal{S}=(r_+,\infty),
\end{equation*}
with $f(r)>0$ for all $r\in\mathcal{S}$.

The isotropic form of the RN metric is also standard in the literature~\cite{Kokkotas2020,Casadio2013ChargedShells}:
\begin{equation}
  \mathrm{d}s^2
  = -\left(
      \frac{1-\dfrac{r_g^2-4Q^2}{16\rho^2}}
           {1+\dfrac{r_g}{2\rho}+\dfrac{r_g^2-4Q^2}{16\rho^2}}
    \right)^2\mathrm{d}t^2
    +\left(
       1+\frac{r_g}{2\rho}+\frac{r_g^2-4Q^2}{16\rho^2}
     \right)^2\bigl(\mathrm{d}\rho^2+\rho^2\mathrm{d}\Omega^2\bigr),
  \label{eq:RN-isotropic-standard}
\end{equation}
with the curvature and isotropic radii related by 
\begin{equation}
  r(\rho)
  =\rho\left(
       1+\frac{r_g}{2\rho}+\frac{r_g^2-4Q^2}{16\rho^2}
     \right),
  \label{eq:RN-rho-standard}
\end{equation}
which maps the outer horizon $r_+$ to the finite isotropic radius
\begin{equation*}
  \rho_+ = \frac{1}{4}\sqrt{r_g^2-4Q^2}.
\end{equation*}

We now show that this relation follows from Eq.~\eqref{expKiselev} together with the RN specialization of the formal expansion \eqref{eq:I-series-general}.

\subsubsection{Series representation and consistency with the general integral}

For the RN metric we write
\begin{equation}
  f(r)=1-\xi(r),
  \qquad
  \xi(r)=\frac{r_g}{r}-\frac{Q^2}{r^2},
  \label{eq:RN-xi-def}
\end{equation}
so that $0\le\xi(r)<1$ on the exterior domain $r>r_+$. The binomial
expansion gives
\begin{equation}
  \frac{1}{\sqrt{f(r)}}
  =\frac{1}{\sqrt{1-\xi(r)}}
  =\sum_{k=0}^{\infty} C_k\,\xi(r)^k,
  \qquad
  C_k:=\frac{1}{4^k}\binom{2k}{k}
      =\frac{(2k)!}{4^k(k!)^2}.
  \label{eq:RN-binomial}
\end{equation}
In the notation of Proposition~\ref{prop:multiindex}, the RN metric corresponds
to a single non-Schwarzschild term with
\begin{equation*}
  s_1=2,
  \qquad
  K_1=-Q^2,
\end{equation*}
so the multi-index reduces to
$\mathbf{m}=(m_0,m_1)\in\mathbb{N}_0^2\setminus\{(0,0)\}$. Specializing Eq.~\eqref{eq:I-series-general} to the RN case and evaluating the resulting elementary integrals, we obtain the series representation
\begin{equation}
  I_{\mathrm{RN}}(r)
  = \sum_{(m_0,m_1)\neq(0,0)}
    \frac{(2(m_0+m_1))!}{4^{\,m_0+m_1}(m_0+m_1)!}\,
    \frac{r_g^{m_0}(-Q^2)^{m_1}}{m_0!\,m_1!\,(m_0+2m_1)}
    \left(r_+^{-(m_0+2m_1)}-r^{-(m_0+2m_1)}\right),
  \label{eq:RN-I-series}
\end{equation}
which converges for each fixed $r>r_+$.

\begin{lemma}
  Fix $R_0>r_+$. For $r\in [R_0,\infty)$ one has
  \begin{equation}
    r\,\frac{\mathrm{d}I_{\mathrm{RN}}}{\mathrm{d}r}(r)
    =\sum_{k\ge1} C_k\,\xi(r)^k,
    \label{eq:RN-rdI-sum}
  \end{equation}
  and the series on the right is absolutely and uniformly convergent on $[R_0,\infty)$.
\end{lemma}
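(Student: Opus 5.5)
The plan is to establish absolute and uniform convergence of the right-hand side of \eqref{eq:RN-rdI-sum} first, and then to identify it with $r\,I_{\mathrm{RN}}'(r)$ by differentiating the series \eqref{eq:RN-I-series} term by term.

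\textbf{Step 1: bounds on $\xi$.} On $[R_0,\infty)$ we have $\xi(r)=r_g/r-Q^2/r^2 = 1-f(r)$. Since $f>0$ on $(r_+,\infty)$, $f$ is continuous and $f\to1$ as $r\to\infty$, we get $\inf_{[R_0,\infty)} f>0$. Hence
\begin{equation*}
q:=\sup_{r\ge R_0}\xi(r)<1 .
\end{equation*}
For a lower bound, $\xi(r)\ge -Q^2/r^2\ge -Q^2/R_0^2>-1$, because $R_0>r_+>Q$. Indeed $r_+\ge r_g/2>|Q|$. Therefore $|\xi(r)|\le \bar q:=\max(q,Q^2/R_0^2)<1$ uniformly on $[R_0,\infty)$. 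Since $C_k\le 1$, the Weierstrass M-test with majorant $\sum_k \bar q^{\,k}$ gives absolute and uniform convergence of $\sum_{k\ge1}C_k\xi(r)^k$. Its sum is $1/\sqrt{f(r)}-1$ by the binomial series \eqref{eq:RN-binomial}.

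\textbf{Step 2: identifying the sum with $rI'$.} By definition \eqref{eq:I(r)} and the fundamental theorem of calculus (the integrand is continuous on $[R_0,\infty)$), we have $r\,\mathrm{d}I_{\mathrm{RN}}/\mathrm{d}r=1/\sqrt{f(r)}-1$. Combined with Step 1, this already proves \eqref{eq:RN-rdI-sum}.

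\textbf{Step 3: consistency with \eqref{eq:RN-I-series}.} To link the identity to the multi-index series, differentiate \eqref{eq:RN-I-series} term by term. The derivative of $-r^{-(m_0+2m_1)}/(m_0+2m_1)$, multiplied by $r$, is $r^{-(m_0+2m_1)}$. Regrouping by $k=m_0+m_1$ and using the multinomial theorem then reproduces $\sum_k C_k\xi^k$.

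Regrouping is legitimate only if the expanded double series converges absolutely. This is the main obstacle, because the absolute multinomial sum is $\sum_k C_k\,\bigl(r_g/r+Q^2/r^2\bigr)^k$, and $r_g/r+Q^2/r^2$ need not be below $1$ near $r_+$. I would first try to sidestep this entirely, since Steps 1--2 prove the lemma without any regrouping. If the double-series route is required, I would impose the condition
\begin{equation*}
r_g/R_0+Q^2/R_0^2<1
\end{equation*}
for absolute convergence, and extend to smaller $R_0$ by real-analytic continuation in $r$.
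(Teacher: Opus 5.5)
Your Step 1 is correct, and more carefully argued than the paper's appeal to $0\le\xi(r)\le\xi(R_0)<1$. The gap is in Step 2, which identifies the sum with the derivative of the wrong object. In this section $I_{\mathrm{RN}}$ is the multi-index series \eqref{eq:RN-I-series}. This lemma, together with the proposition after it, is exactly what the paper uses to conclude that the series coincides with the integral \eqref{eq:I(r)}. Taking $I_{\mathrm{RN}}$ to be the integral turns the statement into a tautology (FTC plus the binomial series) and makes that later conclusion circular. So the ``sidestep'' you prefer does not establish what is claimed. What is missing is the term-by-term differentiation of \eqref{eq:RN-I-series}, which is the paper's proof. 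The constant $r_+^{-(m_0+2m_1)}$ drops and the factor $m_0+2m_1$ cancels. The terms are then grouped by $k=m_0+m_1$ via the binomial identity, and the bound on $\xi$ gives uniform convergence.

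Your diagnosis of the obstacle in Step 3 is also off, and the fallback would not work. No absolute convergence of the double series is needed. The series \eqref{eq:RN-I-series} arises from Proposition~\ref{prop:multiindex} ordered in blocks $|\mathbf m|=k$, and regrouping within a block is finite. By the binomial theorem, the $k$-th block equals $C_k\int_{r_+}^{r}\xi(x)^k\,\mathrm{d}x/x$, and $r$ times its derivative is $C_k\xi(r)^k$. Your Step 1 makes the differentiated block series uniformly convergent on $[R_0,\infty)$. The block series itself converges at $r=R_0$ by monotone convergence, since $\xi\ge 0$ on $(r_+,\infty)$. Together these justify termwise differentiation on bounded subintervals. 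The same monotone-convergence argument shows directly that the block series equals $\int_{r_+}^{r}(f^{-1/2}-1)\,\mathrm{d}x/x$, and only after that does your Step 2 apply legitimately.

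By contrast, the constants $r_+^{-\alpha}$ make the undifferentiated double series absolutely divergent for every $r$. Summed over $|\mathbf m|=k$, the absolute values of its terms are at least $(1-r_+/r)\,\frac{C_k}{2k}\bigl(r_g/r_+ + Q^2/r_+^2\bigr)^k$, and $r_g/r_+ + Q^2/r_+^2=1+2Q^2/r_+^2>1$. Imposing $r_g/R_0+Q^2/R_0^2<1$ therefore controls only the differentiated series. Real-analytic continuation to smaller $R_0$ would also presuppose that the series-defined $I_{\mathrm{RN}}$ is analytic on $(r_+,R_0]$, which is exactly what is in question.
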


\begin{proof}
  Differentiating Eq.~\eqref{eq:RN-I-series}, the constant
  term $r_+^{-(m_0+2m_1)}$ drops out and the factor $(m_0+2m_1)$
  cancels, so with $k=m_0+m_1$ we obtain
  \begin{equation}
    r\,\frac{\mathrm{d}I_{\mathrm{RN}}}{\mathrm{d}r}(r)
    =
    \sum_{(m_0,m_1)\neq(0,0)}
    \frac{(2k)!}{4^k k!}\,
    \frac{r_g^{m_0}(-Q^2)^{m_1}}{m_0!\,m_1!}\,
    r^{-(m_0+2m_1)}.
  \end{equation}
  Grouping terms with fixed $k=m_0+m_1$ and using
  \begin{equation*}
    \sum_{m_0+m_1=k}
    \frac{a^{m_0}b^{m_1}}{m_0!\,m_1!}
    =\frac{(a+b)^k}{k!},
  \end{equation*}
  with
  \begin{equation*}
    a=\frac{r_g}{r},
    \qquad
    b=-\frac{Q^2}{r^2},
  \end{equation*}
  gives
  \begin{equation}
    r\,\frac{\mathrm{d}I_{\mathrm{RN}}}{\mathrm{d}r}(r)
    =\sum_{k\ge1}\frac{(2k)!}{4^k(k!)^2}
      \left(\frac{r_g}{r}-\frac{Q^2}{r^2}\right)^k
    =\sum_{k\ge1} C_k\,\xi(r)^k.
  \end{equation}
  Since $0\le \xi(r)\le \xi(R_0)<1$ for $r\in [R_0,\infty)$, the series
  on the right is absolutely and uniformly convergent on $[R_0,\infty)$.
\end{proof}

\begin{prop}
  For $r>r_+$,
  \begin{equation}
    r\,\frac{\mathrm{d}I_{\mathrm{RN}}}{\mathrm{d}r}(r)
    =f(r)^{-1/2}-1.
    \label{eq:RN-main-identity}
  \end{equation}
\end{prop}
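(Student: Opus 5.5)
The plan is to combine the preceding Lemma with the binomial series \eqref{eq:RN-binomial}. The Lemma already shows that $r\,\mathrm{d}I_{\mathrm{RN}}/\mathrm{d}r = \sum_{k\ge1} C_k\,\xi(r)^k$ on every half-line $[R_0,\infty)$ with $R_0>r_+$. Since any $r>r_+$ lies in such a half-line (take $R_0=r$, or any $R_0\in(r_+,r]$), it suffices to identify this sum with $(1-\xi)^{-1/2}-1$ pointwise.

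\textbf{Step 1: the range of $\xi$.} I first check that $0\le\xi(r)<1$ for $r>r_+$.

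For the upper bound, $f(r)=(r-r_+)(r-r_-)/r^2>0$ there, so $\xi=1-f<1$.

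For the lower bound, $\xi(r)=(r_g r-Q^2)/r^2$, and $r>r_+\ge r_g/2>Q^2/r_g$ because $Q^2<r_g^2/4$. Hence $r_g r>Q^2$ and $\xi\ge0$. This nonnegativity is not even needed: $|\xi|<1$ suffices for the binomial series.

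\textbf{Step 2: identify the sum.} For $|z|<1$ the binomial series gives $(1-z)^{-1/2}=\sum_{k\ge0}C_k z^k$, with $C_k=\binom{2k}{k}4^{-k}$. This is the same expansion used in \eqref{expansion1} and \eqref{eq:RN-binomial}. Substituting $z=\xi(r)$ and subtracting the $k=0$ term $C_0=1$ gives
\begin{equation*}
\sum_{k\ge1}C_k\,\xi(r)^k=\frac{1}{\sqrt{1-\xi(r)}}-1=f(r)^{-1/2}-1,
\end{equation*}
which is \eqref{eq:RN-main-identity}.

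\textbf{Main obstacle.} The delicate point sits inside the Lemma rather than here. Differentiating \eqref{eq:RN-I-series} term by term requires uniform convergence of the differentiated double series. The regrouping by $k=m_0+m_1$ also requires absolute convergence of that double series, not merely convergence of $\sum_k C_k\xi^k$.

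I would secure both with one bound. Using $|b|=Q^2/r^2$, the absolute double series equals $\sum_k C_k\bigl(r_g/r+Q^2/r^2\bigr)^k$. This converges only if $r_g/r+Q^2/r^2<1$, which for $r$ near $r_+$ may fail. So if needed I would take the Lemma as given, since it is stated earlier in the paper.

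As a fallback that avoids the issue entirely, I would work directly from the integral definition \eqref{eq:I(r)}. The fundamental theorem of calculus gives $r\,I'(r)=f(r)^{-1/2}-1$ at once, because the integrand is continuous on $\mathcal S$ by (A1). The content of the Proposition is then that the series \eqref{eq:RN-I-series} represents this $I$.
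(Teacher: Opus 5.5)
Your argument is correct and matches the paper's proof. You apply the preceding Lemma and identify $\sum_{k\ge1}C_k\,\xi(r)^k$ with $(1-\xi(r))^{-1/2}-1$ through the binomial generating function, which is valid because $0\le\xi(r)<1$ for $r>r_+$ (the paper only asserts this; you verify it).

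Your concern about the Lemma is well founded. Since $r_g/r_+ + Q^2/r_+^2 = 1+2Q^2/r_+^2>1$, the $(m_0,m_1)$ double series \eqref{eq:RN-I-series} is not absolutely convergent for any $r>r_+$, because of the constant terms $r_+^{-(m_0+2m_1)}$. It is resolved by summing in blocks of fixed $k=m_0+m_1$. Each block equals $C_k\int_{r_+}^{r}\xi(x)^k\,\mathrm{d}x/x$ exactly, and since $\xi\ge0$ there, monotone convergence identifies the blocked series with $I(r)$. That is the step missing from your fallback.
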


\begin{proof}
  The generating function of the coefficients $C_k$ is
  \begin{equation}
    \sum_{k=0}^{\infty}C_k x^k=(1-x)^{-1/2}.
  \end{equation}
  Since Eq.~\eqref{eq:RN-rdI-sum} starts at $k=1$,
  \begin{equation}
    r\,\frac{\mathrm{d}I_{\mathrm{RN}}}{\mathrm{d}r}(r)
    =\sum_{k\ge1} C_k\,\xi(r)^k
    =(1-\xi(r))^{-1/2}-1
    =f(r)^{-1/2}-1,
  \end{equation}
  which is Eq.~\eqref{eq:RN-main-identity}.
\end{proof}

Comparing Eq.~\eqref{eq:RN-main-identity} with the defining relation
\begin{equation}
  \frac{\mathrm{d}I(r)}{\mathrm{d}r}
  =\frac{1}{r}\left(\frac{1}{\sqrt{f(r)}}-1\right),
\end{equation}
we see that the series \eqref{eq:RN-I-series}  coincides with the integral \eqref{eq:I(r)} in the RN case. Since Eq.~\eqref{eq:RN-I-series} is normalized so that $I_{\mathrm{RN}}(r_+)=0$, it follows that
\begin{equation}
  I_{\mathrm{RN}}(r)=\int_{r_+}^{r}\frac{\mathrm{d}x}{x}
  \left(\frac{1}{\sqrt{f(x)}}-1\right).
\end{equation}
Thus, the isotropic radius given by Theorem~\ref{thm:main},
\begin{equation}
  \rho(r)=\rho_0\,r\,\exp\!\bigl(I_{\mathrm{RN}}(r)\bigr),
  \qquad r\in(r_+,\infty),
  \label{eq:RN-rho-from-theorem}
\end{equation}
solves
\begin{equation}
  \frac{\mathrm{d}\rho}{\mathrm{d}r}
  =\frac{\rho}{r\sqrt{f(r)}}
\end{equation}
on the RN exterior domain.

It is straightforward to check that the explicit relation
\begin{equation}
  \rho(r)
  = \frac{1}{2}\left(r - \frac{r_g}{2} + \sqrt{r^2-r_gr+Q^2}\right),
  \qquad r>r_+,
  \label{eq:RN-rho-explicit}
\end{equation}
obeys the same differential equation and satisfies the horizon normalization
\begin{equation}
  \rho(r_+)=\rho_+=\frac{1}{4}\sqrt{r_g^2-4Q^2}.
\end{equation}
Inverting Eq.~\eqref{eq:RN-rho-explicit} gives
Eq.~\eqref{eq:RN-rho-standard}. Substituting this relation into the
isotropic form given in Theorem~\ref{thm:main} then reproduces the
standard RN isotropic metric \eqref{eq:RN-isotropic-standard}.

In summary, the RN solution provides a second check that the formal expansion
\eqref{eq:I-series-general}, together with Eq.~\eqref{expKiselev}, correctly
reproduces the familiar RN isotropic coordinates for a nontrivial two-parameter
family of metrics.

%===============================================================

\subsection{K\"ottler metric}\label{sec:Kottler_example}

As a third check of Theorem~\ref{thm:main} we consider the
K\"ottler solution
\begin{equation}
  \mathrm{d}s^2
  = -\left(1-\frac{r_g}{r}-\frac{\Lambda}{3}r^2\right)\mathrm{d}t^2
    +\left(1-\frac{r_g}{r}-\frac{\Lambda}{3}r^2\right)^{-1}\mathrm{d}r^2
    + r^2\mathrm{d}\Omega^2,
  \label{eq:K\"ottler-curvature}
\end{equation}
with metric function
\begin{equation}
  f(r)
  = 1-\frac{r_g}{r}-\frac{\Lambda}{3}r^2
  = -\frac{\Lambda}{3r}\,(r-r_1)(r-r_2)(r-r_3),
  \label{eq:K\"ottler-f-factor}
\end{equation}
where $r_1>r_2>0>r_3$ are the three real roots of
$-\tfrac{\Lambda}{3}r^3 + r - r_g = 0$.
We identify the horizons as
\begin{equation}
  r_{\uparrow}:=r_1,\qquad r_+:=r_2,
\end{equation}
so that for $\Lambda>0$ the static region is
$\mathcal{S}=(r_+,r_{\uparrow})=(r_2,r_1)$.

\subsubsection{Isotropic coordinates via elliptic functions}

As shown in Ref.~\cite{Solanki:2021vqv} (and rederived in
Appendix~\ref{app:Kottler-isotropic}), the isotropic radius $\rho$ for the
K\"ottler metric can be expressed in terms of the incomplete elliptic
integral of the first kind,
\begin{equation}
  u = F(\phi,k)
  := \int_0^{\phi}\frac{\mathrm{d}\theta}
   {\sqrt{1-k^2\sin^2\theta}},
  \qquad
  k^2=\frac{-\,r_3(r_1-r_2)}{\,r_1(r_2-r_3)\,}\in(0,1),
  \label{eq:K\"ottler-F-def}
\end{equation}
with amplitude
\begin{equation}
  \phi(r) = \arcsin\sqrt{\frac{\,r_1(r-r_2)\,}{\,r(r_1-r_2)\,}}.
  \label{eq:K\"ottler-phi-def}
\end{equation}
Choosing $\rho(r_2)=\rho_+$ one finds
\begin{equation}
  \ln\frac{\rho(r)}{\rho_+}
  = \sqrt{\frac{3}{\Lambda}}\,
    \frac{2}{\sqrt{\,r_1(r_2-r_3)\,}}\,
    F\!\bigl(\phi(r),k\bigr),
  \qquad r\in(r_2,r_1).
  \label{eq:K\"ottler-rho-F}
\end{equation}
Equivalently,
\begin{equation}
  \rho(r)
  = \rho_+\,
    \exp\!\left[
      \sqrt{\frac{3}{\Lambda}}\,
      \frac{2}{\sqrt{\,r_1(r_2-r_3)\,}}\,
      F\!\bigl(\phi(r),k\bigr)
    \right].
  \label{eq:K\"ottler-rho-explicit}
\end{equation}

Inverting is most conveniently expressed in terms of the Jacobi
elliptic sine $\mathrm{sn}(u,k)$, defined by
\(
  u = F(\phi,k),\; \mathrm{sn}(u,k) = \sin\phi
\).
Introducing
\begin{equation}
  u(\rho)
  := \frac{\sqrt{\,r_1(r_2-r_3)\,}}{2}\,
     \sqrt{\frac{\Lambda}{3}}\,
     \ln\frac{\rho}{\rho_+},
  \qquad
  s(\rho):=\mathrm{sn}^2\!\bigl(u(\rho),k\bigr),
  \label{eq:K\"ottler-u-sn-def}
\end{equation}
and solving
\(
  \tfrac{r_1(r-r_2)}{r(r_1-r_2)} = s(\rho)
\)
for $r$ gives the Jacobi form
\begin{equation}
  r(\rho)
  = \frac{\,r_1 r_2\,}
    {\,r_1 - (r_1-r_2)\,\mathrm{sn}^2\!\bigl(u(\rho),k\bigr)\,},
  \qquad \rho\in(\rho_+,\rho_{\uparrow}),
  \label{eq:K\"ottler-r-of-rho}
\end{equation}
where $\rho_+=\rho(r_2)$ and $\rho_{\uparrow}=\rho(r_1)$ are the isotropic
images of the two horizons. Substituting $r(\rho)$ into
\eqref{eq:K\"ottler-curvature} leads to the K\"ottler metric in isotropic
(conformally flat) form
\begin{equation}
  \mathrm{d}s^2_{\mathrm{Kottler}}
  = -\Bigl[1-\frac{r_g}{r(\rho)}-\frac{\Lambda}{3}r(\rho)^2\Bigr]\mathrm{d}t^2
    +\Bigl(\frac{r(\rho)}{\rho}\Bigr)^2
     \bigl(\mathrm{d}\rho^2+\rho^2\mathrm{d}\Omega^2\bigr).
  \label{eq:K\"ottler-iso-metric}
\end{equation}

\subsubsection{Consistency with the general transformation formula}

For the K\"ottler metric we write
\begin{equation}
  f(r)
  = 1-\frac{r_g}{r}-\frac{\Lambda}{3}r^2
  = 1-\xi(r),
  \qquad
  \xi(r):=\frac{r_g}{r}+K_1 r^2,
  \qquad K_1:=\frac{\Lambda}{3},
  \label{eq:K\"ottler-xi-again}
\end{equation}
so that on the static interval $r\in(r_2,r_1)$ one has $0<\xi(r)<1$ and
the factorisation \eqref{eq:K\"ottler-f-factor},
\begin{equation}
  1-\xi(r)
  = 1-\frac{r_g}{r}-K_1 r^2
  = -\frac{\Lambda}{3r}\,(r-r_1)(r-r_2)(r-r_3),
\end{equation}
with $r_i$ as in Eq.~\eqref{eq:K\"ottler-f-factor}.

Specialising the general integral definition~\eqref{eq:I(r)} to K\"ottler, we
set
\begin{equation}
  I_{\mathrm{K}}(r)
  := \int_{r_2}^{r}\frac{\mathrm{d}x}{x}
     \left(\frac{1}{\sqrt{f(x)}}-1\right),
  \qquad r\in(r_2,r_1),
  \label{eq:IK-def}
\end{equation}
so that Theorem~\ref{thm:main} gives
\begin{equation}
  \rho(r)
  = \frac{\rho_+}{r_2}\,r\,\exp I_{\mathrm{K}}(r),
  \qquad r\in(r_2,r_1),
  \label{eq:rho-from-IK}
\end{equation}
with $\rho_+=\rho(r_2)$.

On the other hand, the explicit isotropic radius obtained in the
previous subsection is
\begin{equation}
  \ln\frac{\rho(r)}{\rho_+}
  = C_\Lambda\,F\!\bigl(\phi(r),k\bigr),
  \qquad
  C_\Lambda:=\sqrt{\frac{3}{\Lambda}}\,
              \frac{2}{\sqrt{\,r_1(r_2-r_3)\,}},
  \label{eq:K\"ottler-rho-elliptic-again}
\end{equation}
where $k$ and $\phi(r)$ are exactly as in
Eqs.~\eqref{eq:K\"ottler-F-def}--\eqref{eq:K\"ottler-phi-def}. Define
\begin{equation}
  J(r)
  := C_\Lambda\,F\!\bigl(\phi(r),k\bigr)
  = \ln\frac{\rho(r)}{\rho_+},
  \qquad r\in(r_2,r_1).
  \label{eq:J-def}
\end{equation}

\paragraph{Differential check.}
Using $\partial_\phi F(\phi,k)=(1-k^2\sin^2\phi)^{-1/2}$, the chain rule
gives
\begin{equation}
  J'(r)
  = \frac{\mathrm{d}}{\mathrm{d}r}
    \bigl[C_\Lambda F(\phi(r),k)\bigr]
  = C_\Lambda\,\frac{\phi'(r)}
    {\sqrt{\,1-k^2\sin^2\phi(r)\,}}.
\end{equation}
A short but straightforward computation using the definitions of
$\phi(r)$, $k$ and the cubic factorisation of $f(r)$ shows that
\begin{equation}
  J'(r)
  = \frac{1}{r\sqrt{f(r)}},
  \qquad r\in(r_2,r_1),
  \label{eq:Jprime-equals}
\end{equation}
with the normalisation $C_\Lambda$ chosen exactly so that the
right-hand side matches the K\"ottler integrand. (The details are given
in Appendix~\ref{app:Kottler-isotropic}.)

Comparing \eqref{eq:Jprime-equals} with \eqref{eq:IK-def}, we see that
\begin{equation}
  \frac{\mathrm{d}}{\mathrm{d}r}
  \Bigl[J(r)-\ln\frac{r}{r_2}\Bigr]
  = \frac{1}{r\sqrt{f(r)}}-\frac{1}{r}
  = \frac{\mathrm{d}I_{\mathrm{K}}}{\mathrm{d}r}(r),
\end{equation}
and both sides vanish at $r=r_2$:
\begin{equation}
  J(r_2)=0,\quad \ln\frac{r_2}{r_2}=0,\quad I_{\mathrm{K}}(r_2)=0.
\end{equation}
Hence
\begin{equation}
  I_{\mathrm{K}}(r)
  = J(r)-\ln\frac{r}{r_2}
  = C_\Lambda\,F\!\bigl(\phi(r),k\bigr)
    - \ln\frac{r}{r_2},
  \qquad r\in(r_2,r_1).
  \label{eq:IK-equals-elliptic}
\end{equation}
Substituting \eqref{eq:IK-equals-elliptic} into the general
transformation \eqref{eq:rho-from-IK} recovers
the elliptic expression \eqref{eq:K\"ottler-rho-elliptic-again} for $\rho(r)$.

\paragraph{Series viewpoint.}
If we expand
\begin{equation}
  \frac{1}{\sqrt{f(r)}}
  = \frac{1}{\sqrt{1-\xi(r)}}
  = \sum_{n=0}^{\infty} a_n\,\xi(r)^n,
  \qquad
  a_n = \frac{(2n)!}{4^n (n!)^2},
\end{equation}
and insert
\(
  \xi(r)=\dfrac{r_g}{r}+K_1 r^2,
\)
then by termwise integration the definition \eqref{eq:IK-def} reproduces
the K\"ottler specialization of the formal multi-index expansion in
Proposition~\ref{prop:multiindex}, namely Eq.~\eqref{eq:I-series-general}
with $N=1$ and $K_1=\Lambda/3$.

Conversely, the elliptic representation admits a corresponding local series expansion; the coefficient comparison with the formal K\"ottler specialization of Proposition~\ref{prop:multiindex} is carried out in Appendix~\ref{app:Kottler-isotropic}. In the main text, it suffices to note that both representations satisfy the same first-order equation and the same normalization at the base point $r=r_+=r_2$, and hence coincide on $(r_2,r_1)$.

%===============================================================

\subsection{Kiselev-Type Background Example: Reissner--Nordstr\"om plus one Kiselev term}\label{Ex.RN+K}

As a concrete application of Theorem~\ref{thm:main}, consider a charged
black hole surrounded by a single Kiselev-type fluid. In this case, the metric function is written in the form
\begin{equation}
  f(r) = 1 - \xi(r),
  \qquad
  \xi(r)
  = \frac{r_g}{r} - \frac{Q^2}{r^2} + \frac{K_1}{r^{s}},
  \label{eq:RNK-xi}
\end{equation}
where $r_g=2M$, with $M$ the mass parameter, $Q$ the charge parameter, 
$K_1$ the Kiselev amplitude, and $s = 3w+1, \,\, w\neq 0.$
 
As before, $r_+$ denotes the event horizon, and $r_\uparrow$ the outer
boundary of the static region, with $r_\uparrow=\infty$ if no cosmological horizon is
present. Thus $f(r)>0$ for $r\in(r_+,r_\uparrow)$.

Specializing Theorem~\ref{thm:main} to this background, the isotropic transformation is
\begin{equation}
  \rho(r) = \rho_0\,r\,\exp I(r),
  \qquad
  I(r) = \int_{r_+}^{r}\frac{\mathrm{d}x}{x}
    \left(\frac{1}{\sqrt{f(x)}}-1\right),
  \qquad r\in(r_+,r_\uparrow),
  \label{eq:RNK-I-def}
\end{equation}
where $\rho_0>0$ is the scale constant from Theorem~\ref{thm:main} 
$\bigl(\rho_0=\rho(r_+)/r_+\bigr)$.

To apply Proposition~\ref{prop:multiindex} to this background, we decompose $\xi$ into its
Schwarzschild, charge, and Kiselev contributions:
\begin{equation}
  \xi(r)=\xi_0(r)+\xi_1(r)+\xi_2(r),
  \qquad
  \xi_0(r)=\frac{r_g}{r},\quad
  \xi_1(r)=-\frac{Q^2}{r^2},\quad
  \xi_2(r)=\frac{K_1}{r^{s}}.
\end{equation}
Then
\begin{equation}
  \frac{1}{\sqrt{f(r)}}
  = \frac{1}{\sqrt{1-\xi(r)}}
  = \sum_{k=0}^{\infty} C_k\,\xi(r)^k,
  \qquad
  C_k:=\frac{(2k)!}{4^k(k!)^2},
  \label{eq:RNK-binomial}
\end{equation}
and the multinomial expansion gives, for each $k\ge1$,
\begin{equation}
  \xi(r)^k
  = \bigl(\xi_0(r)+\xi_1(r)+\xi_2(r)\bigr)^k
  = \sum_{\substack{m_0,m_1,m_2\ge0\\ m_0+m_1+m_2=k}}
    \frac{k!}{m_0!\,m_1!\,m_2!}\,
    \xi_0(r)^{m_0}\xi_1(r)^{m_1}\xi_2(r)^{m_2}.
\end{equation}
Writing $\mathbf{m}=(m_0,m_1,m_2)$ and $|\mathbf{m}|=k$, this becomes
\begin{equation}
  \xi(r)^k
  = \sum_{|\mathbf{m}|=k}
    \frac{k!}{m_0!\,m_1!\,m_2!}\,
    \left(\frac{r_g}{r}\right)^{m_0}
    \left(-\frac{Q^2}{r^2}\right)^{m_1}
    \left(\frac{K_1}{r^{s}}\right)^{m_2}
  = \sum_{|\mathbf{m}|=k}
    \frac{k!}{m_0!\,m_1!\,m_2!}\,
    r_g^{m_0}(-Q^2)^{m_1}K_1^{m_2}\,
    r^{-\alpha(\mathbf{m})},
\end{equation}
with exponent
\begin{equation}
  \alpha(\mathbf{m})
  := m_0 + 2m_1 + s\,m_2.
\end{equation}
Substituting into Eq.~\eqref{eq:I(r)} and using Eq.~\eqref{eq:I-series-general} from Proposition~\ref{prop:multiindex}, we write

\begin{equation}
  I(r)
  = \sum_{\mathbf{m}\neq\mathbf{0}}
    C(\mathbf{m})\,r_g^{m_0}(-Q^2)^{m_1}K_1^{m_2}\,
    \mathcal{J}_{\alpha(\mathbf{m})}(r;r_+),
  \qquad r\in(r_+,r_\uparrow),
  \label{eq:RNK-I-series}
\end{equation}
where
\begin{equation}
  C(\mathbf{m})
  := C_k\,\frac{k!}{m_0!\,m_1!\,m_2!}
   = \frac{(2k)!}{4^k (k!)^2}\,
     \frac{k!}{m_0!\,m_1!\,m_2!},
  \qquad k=|\mathbf{m}|,
\end{equation}
and the elementary radial integrals are
\begin{equation}
  \mathcal{J}_{\alpha}(r;r_+)
  := \int_{r_+}^{r} x^{-1-\alpha}\,\mathrm{d}x
  =
  \begin{cases}
    -\dfrac{r^{-\alpha}-r_+^{-\alpha}}{\alpha},
      & \alpha \neq 0,\\[1.2ex]
    \ln\!\dfrac{r}{r_+},
      & \alpha = 0.
  \end{cases}
  \label{eq:RNK-Jalpha}
\end{equation}

Therefore, in the RN--Kiselev case, the isotropic radius takes the form
\begin{equation}
  \rho(r)
  = \rho_0\, r\, e^{I(r)},
  \qquad r\in(r_+,r_\uparrow),
\end{equation}
with $I(r)$ given by the specialized multi-index expansion \eqref{eq:RNK-I-series}.

%===============================================================

%############################################################
%####### Constructive Lagrange Inversion ###################
%###########################################################

\section{Constructive Lagrange inversion: recovering $r(\rho)$ from $\rho(r)$}\label{SecLagInv}

For the isotropic map produced by Theorem~\ref{thm:main}, the isotropic radius is given implicitly by
\begin{equation}
  \rho(r)=\rho_0\,r\,e^{I(r)},\qquad
  I(r)=\int_{r_+}^{r}\frac{\mathrm{d}x}{x}\left(\frac{1}{\sqrt{f(x)}}-1\right),
  \label{eq:rho-I-def}
\end{equation}
and, in generalized Kiselev-type backgrounds, a closed-form global inverse $r(\rho)$ is generally unavailable. This section gives a constructive method for recovering $r(\rho)$ from $\rho(r)$: (i) we build a \emph{local} analytic inverse as a convergent power series via Lagrange--B{\"u}rmann inversion~\cite{AbramowitzStegun1972-LagrangeExpansion}, (ii) we describe what controls the radius of convergence in terms of the nearest obstruction to analytic continuation, and (iii) we outline a practical series-reversion scheme for generating high-order coefficients.
%, and (iv) we analyze the total error from double truncation and compare series-based methods with numerical root-finding algorithms.

On the real static domain, Theorem~\ref{thm:main} implies that $\rho(r)$ is strictly increasing on $\mathcal{S}$, and hence that the real inverse $r(\rho)$ exists and is unique on $\rho(\mathcal{S})$. For the generalized Kiselev-type metrics considered here, the only additional issue is local complex analyticity, which is needed for Lagrange--B{\"u}rmann inversion; we now show that, in the present class of metrics, this local analyticity holds at every regular point of the static region.

%===============================================================
%===============================================================

\subsection{Local inversion via Lagrange-B{\"u}rmann}\label{subsec:LagBurm}

We recall a standard form of the inversion theorem, adapted to the variables and notation of Theorem~\ref{thm:main}.
\begin{theorem}[Lagrange--B{\"u}rmann inversion]\label{LagrangeInversion}
Let $\Psi$ be analytic in a neighborhood of $r=r_\ast$, and assume that
$\Psi'(r_\ast)\neq 0$. Define
\begin{equation}
  \rho_\ast:=\Psi(r_\ast).
\end{equation}
Then $\Psi$ admits a local analytic inverse $r(\rho)$ near $\rho=\rho_\ast$, and
\begin{equation}
  r(\rho)
  =
  r_\ast+\sum_{n=1}^{\infty}\frac{(\rho-\rho_\ast)^n}{n!}
  \left[
    \frac{\mathrm{d}^{\,n-1}}{\mathrm{d}r^{\,n-1}}
    \left(\frac{r-r_\ast}{\Psi(r)-\Psi(r_\ast)}\right)^{\!n}
  \right]_{r=r_\ast}.
\end{equation}
The series has a nonzero radius of convergence.
\end{theorem}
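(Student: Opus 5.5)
The plan is to prove this in two stages: first the existence of an analytic local inverse, then the explicit coefficient formula via a residue computation. Work in a complex neighborhood of $r_\ast$ and write $\Psi(r)-\rho_\ast=(r-r_\ast)\,g(r)$. Here $g$ is analytic near $r_\ast$ with $g(r_\ast)=\Psi'(r_\ast)\neq0$. Hence $\varphi(r):=(r-r_\ast)/(\Psi(r)-\rho_\ast)=1/g(r)$ is analytic and nonvanishing on some disc $|r-r_\ast|\le\delta$. This shows that the bracketed quantity in the statement is well defined: it is the $(n-1)$st derivative of the analytic function $\varphi^n$, evaluated at $r_\ast$.

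\textbf{Local inverse.} Next I would invoke the holomorphic inverse function theorem. Shrink $\delta$ so that $\Psi(r)-\rho_\ast$ has exactly one zero, namely $r_\ast$, in the closed disc. Set $\mu:=\min_{|r-r_\ast|=\delta}|\Psi(r)-\rho_\ast|>0$. By Rouch\'e's theorem, for every $\rho$ with $|\rho-\rho_\ast|<\mu$ the equation $\Psi(r)=\rho$ has exactly one root $r(\rho)$ in $|r-r_\ast|<\delta$. The argument principle then gives the representation
\begin{equation*}
  r(\rho)=\frac{1}{2\pi i}\oint_{|r-r_\ast|=\delta}\frac{r\,\Psi'(r)}{\Psi(r)-\rho}\,\mathrm{d}r .
\end{equation*}
The integrand depends analytically on $\rho$, so $r(\rho)$ is analytic on $|\rho-\rho_\ast|<\mu$. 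Its Taylor series therefore has radius of convergence at least $\mu>0$, which gives the nonzero-radius claim.

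\textbf{Coefficients.} For this I would use a second contour formula. Since $r'(\rho)$ is analytic, the Cauchy formula for Taylor coefficients together with the substitution $\rho=\Psi(r)$ gives
\begin{equation*}
  [(\rho-\rho_\ast)^n]\,r(\rho)=\frac{1}{n}\,[(\rho-\rho_\ast)^{n-1}]\,r'(\rho)
  =\frac{1}{2\pi i\,n}\oint\frac{\mathrm{d}\rho}{(\rho-\rho_\ast)^{n}}
  =\frac{1}{2\pi i\,n}\oint\frac{\mathrm{d}r}{(\Psi(r)-\rho_\ast)^{n}}.
\end{equation*}
The contour in $\rho$ is a small circle around $\rho_\ast$. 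It is mapped by the local biholomorphism $r(\rho)$ to a simple closed curve around $r_\ast$, traversed once positively, and the identity $\mathrm{d}\rho=\Psi'(r)\,\mathrm{d}r$ together with $r'(\rho)=1/\Psi'(r)$ cancels the derivative factors. Writing $(\Psi(r)-\rho_\ast)^{-n}=\varphi(r)^n/(r-r_\ast)^n$, the last integral is a residue at a pole of order $n$. It equals $\frac{1}{n}\cdot\frac{1}{(n-1)!}\,\frac{\mathrm{d}^{n-1}}{\mathrm{d}r^{n-1}}\varphi^n\big|_{r_\ast}$, which is exactly $\frac{1}{n!}[\cdots]$ as stated. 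The constant term is $r(\rho_\ast)=r_\ast$.

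The only delicate step is justifying the change of variables in the contour integral, namely that the image curve winds once around $r_\ast$. This follows because $r(\rho)$ is injective and orientation-preserving as a conformal map. Everything else is standard complex analysis.
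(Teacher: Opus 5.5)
The paper gives no proof of its own here. It recalls the theorem as a standard result, citing Abramowitz--Stegun, and only applies it to $\Psi=\rho$, so your argument supplies what the paper omits.

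Your proof is the classical complex-analytic one, and it is correct:
\begin{itemize}
\item Rouch\'e's theorem together with the generalized argument principle gives existence of the root, its uniqueness in $|r-r_\ast|<\delta$, and holomorphy of the inverse on $|\rho-\rho_\ast|<\mu$.
\item The identity $c_n=\frac{1}{n}[(\rho-\rho_\ast)^{n-1}]\,r'(\rho)$, combined with the substitution $\rho=\Psi(r)$, reduces $c_n$ to the residue of $\varphi^n/(r-r_\ast)^n$ at a pole of order $n$. That residue gives exactly the stated formula.
\end{itemize}
Beyond the bare citation, your route buys a self-contained argument and an explicit rigorous lower bound $R\ge\mu=\min_{|r-r_\ast|=\delta}|\Psi(r)-\rho_\ast|$ on the convergence radius. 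This complements the paper's later description of $R$ as the distance to the nearest obstruction to analytic continuation; that distance is generally larger than $\mu$.

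A few small repairs are needed.
\begin{itemize}
\item The middle integral in your display has lost the factor $r'(\rho)$. It should read $\frac{1}{2\pi i\,n}\oint r'(\rho)\,\mathrm{d}\rho/(\rho-\rho_\ast)^{n}$; as written it vanishes for every $n\ge 2$. It is precisely $r'(\rho)\,\mathrm{d}\rho=\mathrm{d}r$ that makes the next equality a mere reparametrization of the contour.
\item The winding-number step is cleaner via the argument principle than via ``injective and orientation-preserving.'' Let $\gamma$ be the circle $|\rho-\rho_\ast|=\epsilon<\mu$. Then $\mathrm{Ind}_{r\circ\gamma}(r_\ast)=\frac{1}{2\pi i}\oint_\gamma \frac{r'(\rho)}{r(\rho)-r_\ast}\,\mathrm{d}\rho$, which counts the zeros of $r(\rho)-r_\ast$ inside $\gamma$. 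There is exactly one, at $\rho_\ast$, and it is simple because $r'(\rho_\ast)=1/\Psi'(r_\ast)\neq 0$.
\item In the paper's setting $\Psi$ is real-analytic on a real interval. Say explicitly that you pass to its holomorphic extension on a complex disc. Also note that, for real $\rho$, the complex inverse agrees with the real inverse by uniqueness of the root in the disc, so the coefficients are real.
\item You reuse the letter $g$, which the paper reserves for $1/(r\sqrt{f(r)})$ in the very next subsection. Rename it to avoid a clash.
\end{itemize}
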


%===============================================================

\subsubsection{Application to the isotropic map}

Fix a base point $r_\ast\in\mathcal{S}$ and define $\rho_\ast:=\rho(r_\ast)$. For the generalized Kiselev-type metrics, $f(r)$ is locally analytic at every regular point of the static region. Since $f(r_\ast)>0$, the function $1/\sqrt{f(r)}$ is analytic in a neighborhood of $r_\ast$. We now consider the local isotropic map
\begin{equation}
  \Psi(r):=\rho(r).
\end{equation}

The differential relation \eqref{eq:d_ln_rho} obtained in the proof of Theorem~\ref{thm:main} is
\begin{equation}
  \frac{\Psi'(r)}{\Psi(r)}=\frac{1}{r\sqrt{f(r)}},
  \qquad r\in\mathcal{S},
  \label{eq:Psi-log-deriv}
\end{equation}
which is independent of the constant $\rho_0$ in \eqref{eq:rho-I-def}. Define
\begin{equation}
  g(r):=\frac{\Psi'(r)}{\Psi(r)}=\frac{1}{r\sqrt{f(r)}}.
  \label{eq:g-def}
\end{equation}
Then $g$ is analytic near $r_\ast$, and so is $\Psi$, since they are related by the ODE
\begin{equation}\label{eq:PsiODE}
\Psi'(r)=g(r)\Psi(r).
\end{equation}
Moreover, since $\Psi(r_\ast)=\rho_\ast>0$,
\begin{equation}
\Psi'(r_\ast)=\Psi(r_\ast)\,g(r_\ast)
=\frac{\Psi(r_\ast)}{r_\ast\sqrt{f(r_\ast)}}\neq 0.
\end{equation}
Therefore Theorem~\ref{LagrangeInversion} applies to the map $\Psi(r)=\rho(r)$. Thus the local inverse $r(\rho)$ exists, is analytic in a neighborhood of $\rho_\ast$, and is given by
\begin{equation}\label{eq:KiselevInv}
  r(\rho)
  =
  r_\ast+\sum_{n=1}^{\infty}\frac{(\rho-\rho_\ast)^n}{n!}
  \left[
    \frac{\mathrm{d}^{\,n-1}}{\mathrm{d}r^{\,n-1}}
    \left(\frac{r-r_\ast}{\Psi(r)-\Psi(r_\ast)}\right)^{\!n}
  \right]_{r=r_\ast}.
\end{equation}
The radius of convergence is determined by the nearest obstruction to analytic continuation of
$r(\rho)$ in the complex $\rho$-plane; this is discussed in \S\ref{sec:InvConvergence}.

%===============================================================

\subsubsection{Low-order coefficients}

Let $y:=\rho-\rho_\ast$. Expanding \eqref{eq:KiselevInv} gives
\begin{equation}\label{eq:rLowOrder}
  r(\rho)=r_\ast+c_1 y+c_2 y^2+c_3 y^3+O(y^4),
\end{equation}
where
\begin{align}
  c_1&=\frac{1}{\Psi'(r_\ast)},\label{eq:c1}\\[2mm]
  c_2&=-\frac{\Psi''(r_\ast)}{2\,\Psi'(r_\ast)^{3}},\label{eq:c2}\\[2mm]
  c_3&=\frac{3\,\Psi''(r_\ast)^{2}-\Psi'(r_\ast)\Psi'''(r_\ast)}
            {6\,\Psi'(r_\ast)^{5}}.\label{eq:c3}
\end{align}
Since $\Psi'=g\Psi$, successive derivatives can be expressed in terms of $g$:
\begin{equation}\label{eq:PsiDerivs}
  \Psi''=(g^2+g')\Psi,\qquad
  \Psi'''=(g^3+3gg'+g'')\Psi.
\end{equation}
Thus all coefficients $c_n$ can be written in terms of $g$ and its derivatives evaluated at $r_\ast$.
It is convenient to introduce
\begin{equation}\label{eq:calAdef}
  \mathcal{A}(r):=\frac{1}{r}+\frac{f'(r)}{2f(r)},
  \qquad
  \mathcal{A}'(r)=-\frac{1}{r^2}+\frac{f''(r)f(r)-(f'(r))^2}{2f(r)^2}.
\end{equation}
Because $g=r^{-1}f^{-1/2}$, one has the compact identities
\begin{equation}\label{eq:gDerivsA}
  \frac{g'}{g}= -\mathcal{A},\qquad
  g'=-\mathcal{A}\,g,\qquad
  g''=(\mathcal{A}^2-\mathcal{A}')\,g,
\end{equation}
and therefore $\Psi^{(k)}(r_\ast)$ can be expressed explicitly through derivatives of $f$ at $r_\ast$.

For generalized Kiselev-type metrics, one may also express the derivatives of $g$ through the multi-index expansion of $f(r)^{-1/2}$ used earlier; this yields explicit coefficient formulas in parameter regimes where that expansion is valid. Since this machinery is auxiliary to the local inversion itself, we record it in Appendix~\ref{app:LagrangeAux}.

%===============================================================
%===============================================================

\subsection{Convergence and rate of the inverse series}\label{sec:InvConvergence}

Theorem~\ref{LagrangeInversion} guarantees that the Lagrange--B{\"u}rmann series
\eqref{eq:KiselevInv} has a nonzero radius of convergence at any base point $r_\ast$ with
$\Psi'(r_\ast)\neq 0$. The detailed complex-analytic description of the convergence radius and its obstructions is recorded in Appendix~\ref{app:InvConvergence}. Here we only note one point that is especially relevant for applications: in the nonextremal case covered by assumption~(A1), the horizon values themselves do not obstruct analyticity of the inverse. More precisely, the next lemma shows that a simple horizon gives $\mathrm{d}r/\mathrm{d}\rho=0$ at $\rho=\rho_h$, while $r(\rho)$ remains analytic there.

\begin{lemma}[Local behavior at a simple horizon]\label{lem:SimpleHorizon}
Let $r_h>0$ be a simple root of $f$ that is a finite endpoint of a static interval, i.e. $f(r_h)=0$ and $f'(r_h)\neq 0$, and define $\rho_h$ as the one-sided limit of $\rho(r)$ as $r\to r_h$ from within that static region. 
Then, as $r\to r_h$ from within the static region,
\begin{equation}
\rho(r)=\rho_h\Bigl(1\pm\gamma_h\sqrt{|r-r_h|}+O(|r-r_h|)\Bigr),
\qquad
\gamma_h=\frac{2}{r_h\sqrt{|f'(r_h)|}}>0,
\label{eq:rhoPuiseux}
\end{equation}
where the plus sign corresponds to the lower endpoint $r_h=r_+$ and the minus sign to the upper endpoint $r_h=r_\uparrow$. Consequently, with the same sign convention,
\begin{equation}
r-r_h
=
\pm \frac{1}{\gamma_h^{2}\rho_h^{2}}(\rho-\rho_h)^2
+O\!\left((\rho-\rho_h)^3\right).
\label{eq:rHorizonTaylor}
\end{equation}
In particular, the inverse is analytic in $\rho$ at $\rho=\rho_h$.
\end{lemma}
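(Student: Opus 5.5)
We work directly from the logarithmic ODE \eqref{eq:d_ln_rho}, $\mathrm{d}\ln\rho/\mathrm{d}r = 1/(r\sqrt{f(r)})$, and localize at $r_h$. Write $f(r)=f'(r_h)(r-r_h)\,h(r)$ with $h$ analytic near $r_h$ and $h(r_h)=1$; this is possible because $f$ is a finite sum of powers, hence analytic at $r_h>0$, and the root is simple. Inside the static region $f>0$, so $f'(r_h)(r-r_h)=|f'(r_h)|\,|r-r_h|$. (The sign of $f'(r_h)$ is positive at $r_+$ and negative at $r_\uparrow$, matching the side on which $r$ approaches.) Then
\[
\frac{1}{r\sqrt{f(r)}}=\frac{1}{r_h\sqrt{|f'(r_h)|}}\,|r-r_h|^{-1/2}\,\bigl(1+a_1(r-r_h)+a_2(r-r_h)^2+\cdots\bigr),
\]
with the bracket equal to $r_h/(r\sqrt{h(r)})$, which is analytic at $r_h$.

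The first step is to integrate this expansion from $r_h$, using the fact that the integral converges at the endpoint, which is also how $\rho_h$ is defined as a finite positive limit. This gives
\[
\ln\frac{\rho}{\rho_h}=\pm\frac{2}{r_h\sqrt{|f'(r_h)|}}\,|r-r_h|^{1/2}\bigl(1+O(|r-r_h|)\bigr)=\pm\gamma_h\sqrt{|r-r_h|}+O(|r-r_h|^{3/2}).
\]
The sign is $+$ at $r_+$, since $\rho$ increases away from the lower endpoint, and $-$ at $r_\uparrow$.

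The second step is to introduce the uniformizing variable $u$ by setting $r-r_h=\pm u^2$ with $u\ge0$. In this variable the right-hand side becomes $\pm\gamma_h u\,G(u^2)$ with $G$ analytic and $G(0)=1$. The key structural point is that only odd powers of $u$ occur, because the integrand is $|r-r_h|^{-1/2}$ times a power series in $(r-r_h)$. Exponentiating then gives $\rho=\rho_h\exp\bigl(\pm\gamma_h u\,G(u^2)\bigr)=\rho_h\bigl(1\pm\gamma_h u+O(u^2)\bigr)$, which is \eqref{eq:rhoPuiseux}.

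The third step, which I expect to be the main point requiring care, is analyticity of the inverse. Set $\lambda:=\ln(\rho/\rho_h)$, which is analytic in $\rho$ near $\rho_h$ with $\lambda=(\rho-\rho_h)/\rho_h+O\bigl((\rho-\rho_h)^2\bigr)$. The map $u\mapsto \pm\gamma_h u G(u^2)$ is analytic with nonzero derivative $\pm\gamma_h$ at $u=0$. By the holomorphic inverse function theorem (equivalently Theorem~\ref{LagrangeInversion}), $u=\pm\lambda/\gamma_h+O(\lambda^2)$ is therefore an analytic function of $\lambda$, and hence of $\rho$. Consequently
\[
r-r_h=\pm u^2=\pm\frac{\lambda^2}{\gamma_h^2}+O(\lambda^3)=\pm\frac{(\rho-\rho_h)^2}{\gamma_h^2\rho_h^2}+O\bigl((\rho-\rho_h)^3\bigr),
\]
which is analytic in $\rho$. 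This is \eqref{eq:rHorizonTaylor}.

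One subtlety remains: the real branch, $u\ge0$ on the static side, must match the analytic branch chosen. This holds because the analytic $u(\lambda)$ is real on real $\lambda$ and has sign determined by $\pm\lambda$, which is nonnegative on the static side by the first step. Squaring then removes any sign ambiguity, so $r(\rho)$ is the same analytic function on both sides of $\rho_h$; that is, it continues analytically through the horizon value.
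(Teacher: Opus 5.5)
Your proof is correct, and its skeleton matches the paper's: expand $1/(r\sqrt{f})$ near the simple root, integrate the integrable $|r-r_h|^{-1/2}$ singularity from $r_h$, exponentiate, and invert. The difference is how much structure you keep along the way. The paper uses only $f=f'(r_h)(r-r_h)+O((r-r_h)^2)$. That gives an $O(1)$ remainder in the integrand and an $O(|r-r_h|)$ remainder in $\ln(\rho/\rho_h)$. This is enough for \eqref{eq:rhoPuiseux} and, by inverting $\lambda=\pm\gamma_h u+O(u^2)$, even for the $O((\rho-\rho_h)^3)$ remainder in \eqref{eq:rHorizonTaylor}. But the closing analyticity claim is only asserted there, and no finite-order asymptotic estimate can establish it. Your factorization $f=f'(r_h)(r-r_h)h(r)$ with $h$ analytic, combined with the uniformizing variable $u=\sqrt{|r-r_h|}$, shows that $\lambda$ is a convergent power series in $u$ with nonzero linear coefficient. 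The holomorphic inverse function theorem then makes $u$, and hence $r-r_h=\pm u^2$, analytic in $\rho$; this is what actually proves the last sentence of the lemma, and it also sharpens the log remainder to $O(|r-r_h|^{3/2})$. The cost is an explicit real-analyticity hypothesis on $f$ at $r_h$. It holds for generalized Kiselev-type $f$ (finite sums of real powers near $r_h>0$), but it does not follow from (A1), which gives only smoothness on the open interval; the paper needs it too but leaves it implicit. One point of emphasis: analyticity of the inverse needs only that $\lambda$ be an analytic function of $u$ with nonvanishing derivative at $u=0$. The oddness in $u$ that you call ``the key structural point'' is an extra feature. What it buys is that $r$ is even in $\lambda$, i.e.\ locally invariant under $\rho\mapsto\rho_h^2/\rho$, so the continuation past $\rho_h$ re-enters the static region, just as the second sheet does in Schwarzschild isotropic coordinates.
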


\begin{proof}
For a simple root at a finite endpoint of a static interval, one has
\begin{equation}
f(r)=f'(r_h)(r-r_h)+O\!\left((r-r_h)^2\right),
\end{equation}
and therefore, along the static side,
\begin{equation}\label{eq:g-horizon-asympt}
\frac{1}{r\sqrt{f(r)}}
=
\frac{1}{r_h\sqrt{|f'(r_h)|}}\,|r-r_h|^{-1/2}
+O(1).
\end{equation}
Using Eq.~\eqref{eq:Psi-log-deriv}, with $\Psi=\rho$, together with Eq.~\eqref{eq:g-horizon-asympt}, integration from $r_h$ to $r$ along the static side gives
\begin{equation}
\ln \frac{\rho(r)}{\rho_h}
=
\pm \gamma_h\sqrt{|r-r_h|}
+O(|r-r_h|),
\end{equation}
where the plus sign corresponds to the lower endpoint $r_h=r_+$ and the minus sign to the upper endpoint $r_h=r_\uparrow$. Exponentiating yields \eqref{eq:rhoPuiseux}, and inversion gives \eqref{eq:rHorizonTaylor}.
\end{proof}

\begin{remark}[Extremal horizons and reduced analyticity]\label{rem:Extremal}
This situation lies outside assumption (A1), which requires simple zeros at finite endpoints of
$\mathcal{S}$. If $f$ has a double root at $r_h$ (extremality), then $\sqrt{f(r)}\sim \kappa |r-r_h|$ on the relevant real side, and the
integrand $1/(r\sqrt{f})$ has a simple pole, producing a logarithmic term in $\ln\rho$. In that
case $\rho$ behaves locally as a power of $|r-r_h|$, so that the inverse behaves locally as
$|r-r_h|\sim C\,\rho^{1/\beta}$ after a suitable normalization. Unless $1/\beta$ is a positive integer, this introduces a
branch point for $r(\rho)$ and may reduce the convergence radius in \eqref{eq:Rdef}. In what
follows we restrict attention to the nonextremal case covered by Theorem~\ref{thm:main}, where the local inverse is analytic at the
horizon values.
\end{remark}

Standard coefficient-based diagnostics for the convergence radius, including root and ratio tests, Domb--Sykes analysis, and truncation-error estimates, are useful when implementing the inverse series at high order. Since these tools are auxiliary to the construction itself, we collect them in Appendix~\ref{app:InvConvergence}. We now turn to the practical generation of high-order coefficients.
%===============================================================
%===============================================================

\subsection{Practical generation of high-order coefficients}

For convergence diagnostics one needs coefficients $c_n$ at moderately high order (typically
$N\sim 30$--$100$). Direct evaluation of \eqref{eq:KiselevInv} via $(n\!-\!1)$-th derivatives becomes
rapidly unstable at high $n$. A numerically robust alternative is to compute the inverse series by
\emph{series reversion}, using only truncated power-series algebra. Introduce shifted variables
\begin{equation}
x:=r-r_\ast,\qquad y:=\rho-\rho_\ast,
\end{equation}
and expand the forward map as $y=A(x)=\sum_{k=1}^{N}a_k x^k$ with $a_1=\rho'(r_\ast)\neq 0$.
Then the inverse has the form $x=B(y)=\sum_{n=1}^{N}c_n y^n$ and is obtained by enforcing the
composition identity $A(B(y))\equiv y$ order by order. A convenient implementation proceeds as:

\begin{enumerate}
\item Expand the metric function in a neighborhood of $r_\ast$,
\begin{equation}
f(r_\ast+x)=\sum_{k=0}^{N} f_k x^k+O(x^{N+1}),
\end{equation}
using the local generalized binomial expansion of $(r_\ast+x)^{-s_i}$ about $x=0$.
\item Compute the truncated series for $1/\sqrt{f(r_\ast+x)}$ by Newton iteration in
the ring of truncated series, and form
\begin{equation}
g(r_\ast+x)=\frac{1}{(r_\ast+x)\sqrt{f(r_\ast+x)}}=\sum_{k=0}^{N} g_k x^k+O(x^{N+1}).
\end{equation}
\item Solve $\mathrm{d}\rho(r_\ast+x)/\mathrm{d}x=g(r_\ast+x)\,\rho(r_\ast+x)$ in series form. Writing
$\rho(r_\ast+x)=\sum_{k=0}^{N}\rho_k x^k$ with $\rho_0=\rho_\ast$, coefficient matching yields
the recursion
\begin{equation}
(k+1)\rho_{k+1}=\sum_{j=0}^{k} g_j\,\rho_{k-j},\qquad k=0,1,\dots,N-1.
\label{eq:rhoRecursion}
\end{equation}
Then $y=\rho-\rho_\ast=\sum_{k=1}^{N} a_k x^k$ with $a_k=\rho_k$.
\item Perform series reversion to obtain $x=\sum_{n=1}^{N}c_n y^n$ by enforcing
$A(B(y))\equiv y$ to order $N$. This produces $c_n$ recursively from $a_1,\dots,a_n$ without
high-order differentiation.
\end{enumerate}

The coefficients computed by this procedure agree with those obtained from Lagrange inversion
\eqref{eq:KiselevInv}. They can then be used in the coefficient-based convergence diagnostics summarized in Appendix~\ref{app:InvConvergence}, \eqref{eq:RratioRoot}--\eqref{eq:DombSykes}, for Schwarzschild, RN, K\"ottler, and generalized Kiselev-type metrics, including multi-component cases.

%##########################################
%####### Error analysis ###################
%##########################################

\section{Truncation in practical computation and comparison with numerical inversion}
\label{sec:error_analysis_double_truncation}

% \subsection{Truncation Error Decomposition}

In practical applications, the isotropic transformation must be computed with finite resources, requiring truncation of the formal multi-index expansion \eqref{eq:I-series-general} at $|\mathbf m| \leq k_{\max}$ and, if needed, the Lagrange inverse \eqref{eq:KiselevInv} at order $N$. 
This section summarizes the resulting truncation error and compares the series-based construction with Newton--Raphson numerical inversion.

It is convenient to separate the error of the doubly truncated approximation $r^{(N,k_{\max})}(\rho)$ into the schematic contributions
\begin{equation}
\left|r^{(N,k_{\max})}(\rho)-r(\rho)\right|
\leq
\varepsilon_{I}^{(k_{\max})}
+
\varepsilon_{L}^{(N)}
+
\varepsilon_{\mathrm{coupling}},
\label{eq:total_error}
\end{equation}
where $\varepsilon_{I}^{(k_{\max})}$ is the truncation error in the forward map $I(r)$, $\varepsilon_{L}^{(N)}$ is the truncation error in the inverse series, and $\varepsilon_{\mathrm{coupling}}$ accounts for the propagation of the forward truncation into the inverse coefficients. A more detailed discussion of these contributions, together with illustrative balancing estimates for $N$ and $k_{\max}$, is given in Appendix~\ref{app:double_truncation_error}.

Representative order-of-magnitude estimates for Schwarzschild, Reissner--Nordstr\"om, and K\"ottler benchmark cases are collected in Appendix~\ref{app:error_estimates_examples}, together with a summary table of illustrative truncation choices and resulting error scales.

%===============================================================

\subsection{Practical comparison with numerical inversion}
\label{sec:ComparisonAlg}

An alternative to series inversion is to solve the implicit equation $\rho(r) - \rho_{\text{target}} = 0$ numerically for $r$ using Newton--Raphson iteration~\cite{press2007numerical,burden2015numerical}:
\begin{equation}
r_{n+1} = r_n - \frac{\rho(r_n) - \rho_{\text{target}}}{\rho'(r_n)}, \quad n = 0, 1, 2, \ldots,
\label{eq:newton_raphson}
\end{equation}
where
\begin{equation}
\rho'(r)=\frac{\rho(r)}{r\sqrt{f(r)}},
\end{equation}
in agreement with Eq.~\eqref{eq:drho-dr}. Newton--Raphson exhibits \textit{quadratic convergence}: if $e_n := |r_n-r_{\text{true}}|$ is the error at iteration $n$, then
\begin{equation}
e_{n+1} \leq C \, e_n^2,
\label{eq:quadratic_convergence}
\end{equation}
provided the initial guess $r_0$ is sufficiently close to $r_{\text{true}}$ and $\rho'(r_{\text{true}}) \neq 0$. In favorable regimes, only a few iterations may suffice to reach the accuracy permitted by the evaluation of the forward map, and may approach machine precision ($\sim 10^{-15}$) when $\rho(r)$ is computed numerically to comparable accuracy.

Table~\ref{tab:method_comparison} compares the Lagrange series and Newton--Raphson approaches across several performance metrics.

\begin{table}[ht]
\centering
\caption{Comparison of Lagrange-series inversion and Newton--Raphson root-finding for computing $r(\rho)$ from the implicit relation $\rho(r)$. Costs are quoted per evaluation of $r(\rho)$ after any one-time setup. For the Lagrange method, the setup consists of computing the coefficients $c_n$ from the truncated forward expansion. For Newton--Raphson, each iteration requires evaluation of the forward map $\rho(r)$ (and, when used explicitly, its derivative). The table is meant as a practical comparison of usage regimes, not a sharp complexity analysis.}
\label{tab:method_comparison}
\begin{tabular}{lcc}
\hline\hline
\textbf{Criterion} & \textbf{Lagrange Series} & \textbf{Newton--Raphson} \\
\hline
Setup cost & Compute $c_n$ once & Optional table / initial guess \\
Evaluation cost & Horner: $O(N)$ & $3$--$5$ forward-map iterations \\
Typical accuracy & $\sim 10^{-6}$ to $10^{-10}$ & Truncation-limited, or near machine precision \\
Analytic derivatives & Yes (explicit from the series) & Available via implicit differentiation \\
Global robustness & Local (valid near $\rho_\ast$) & Depends on initial guess and target point \\
Best use case & Regular grids, symbolic work & Scattered points, maximum accuracy \\
\hline\hline
\end{tabular}
\end{table}

\paragraph{Practical recommendations.}
The choice between methods depends on the application:\\

\textbf{Lagrange series is preferred} when:
\begin{enumerate}
\item $r(\rho)$ must be evaluated at many points on a regular grid (e.g., Cartesian initial data for numerical relativity codes~\cite{loffler2012einstein,ansorg2004single,cook2000initial}),
\item explicit analytic expressions for derivatives $\partial^j r / \partial \rho^j$ are needed for computing Christoffel symbols or curvature invariants,
\item moderate accuracy ($10^{-6}$ to $10^{-10}$) suffices,
\item symbolic or algebraic manipulation is desired.
\end{enumerate}
In this regime, the one-time cost of computing coefficients $c_n$ is amortized over many evaluations, and Horner's method~\cite{knuth1997art} provides very fast polynomial evaluation.\\

\textbf{Newton--Raphson is preferred} when:
\begin{enumerate}
\item very high accuracy (machine precision) is required,
\item $r(\rho)$ is needed at only a few scattered evaluation points (e.g., adaptive mesh refinement),
\item the convergence radius $R$ of the Lagrange series is small, limiting its domain of validity,
\item a reliable initial guess is available, for example from a nearby grid point or from a truncated Lagrange approximation.
\end{enumerate}
Each Newton--Raphson call requires $3$--$5$ evaluations of $\rho(r)$ via the truncated forward map, but this overhead is acceptable for sparse queries.\\

\paragraph{Hybrid strategy for numerical relativity.}
In practice, a \textit{hybrid approach} often yields the best performance:
\begin{enumerate}
\item Compute the forward map $I(r)$ via the truncated multi-index series \eqref{eq:I-series-general} with $k_{\max} = 20$--$30$, aiming for $\varepsilon_{\text{I}} \sim 10^{-9}$ to $10^{-10}$ on the chosen region. see Appendix~\ref{app:error_estimates_examples} for representative examples underlying these values.
\item For setting up initial data on a uniform Cartesian grid $(x, y, z)$ in the Einstein Toolkit~\cite{loffler2012einstein,zilhao2013introduction}, use local Lagrange inverses
\begin{equation*}
r(\rho)=\sum_{n=0}^N c_n(\rho-\rho_\ast)^n
\end{equation*}
with $N$ chosen in proportion to $k_{\max}$ on suitable overlapping $\rho$-patches to obtain $r(\rho)$ at grid points $\rho=\sqrt{x^2+y^2+z^2}$; see Appendix~\ref{app:error_estimates_examples} for representative choices. From this, construct the conformal factor $\Phi(\rho)^2=r(\rho)/\rho$ and, in numerical-relativity notation, the lapse $\alpha(\rho)=\sqrt{F(\rho)}=\sqrt{f(r(\rho))}$.
\item For post-processing or analysis requiring very high accuracy (e.g., constraint violation checks, horizon location), apply Newton--Raphson with the Lagrange result as the initial guess.
\end{enumerate}
This strategy leverages the strengths of both methods: the Lagrange series provides fast, analytic evaluation across the computational domain through a suitable patchwise construction, while Newton--Raphson refines accuracy where needed~\cite{baumgarte2010numerical,alcubierre2008introduction}.

Detailed convergence diagnostics and benchmark-based validation are collected in Appendix~\ref{app:series_diagnostics}, including Cauchy--Hadamard and Domb--Sykes analyses, Richardson-style checks across truncation orders, and comparison with exact benchmark cases.

%############################################################
%######### Discussion and Interpretation ###################
%###########################################################

\section{Discussion and Interpretation}\label{sec:discussion}

\subsection{Geometric meaning of the isotropic representation}

Theorem~\ref{thm:main} gives a constructive map from curvature (areal-radius) coordinates to an isotropic radius $\rho$ such that the spatial metric on $t=\mathrm{const}$ slices is conformally flat:
\begin{equation}
\gamma_{ij}\,\mathrm{d}x^i\mathrm{d}x^j
=\Phi(\rho)^4\bigl(\mathrm{d}\rho^2+\rho^2\mathrm{d}\Omega^2\bigr),
\qquad
\Phi(\rho)^2=\frac{r(\rho)}{\rho}.
\end{equation}
The areal radius is then given implicitly by $r(\rho)$. The conformal factor $\Phi$ determines the intrinsic geometry of the slice. This form matches standard conformal decompositions used in numerical relativity and in conformally flat initial data constructions \cite{shibata1995evolution,baumgarte1999shapiro,alcubierre2008introduction,baumgarte2010numerical,cook2000initial,Gourgoulhon2012}. Once $r(\rho)$ is known (analytically in special cases, or by the inversion method of Sec.~\ref{SecLagInv}), spatial quantities needed in diagnostics---Christoffel symbols, 3-curvature, and constraint expressions---follow from derivatives of $\Phi(\rho)$.

In curvature coordinates the map is defined by
\begin{equation}
\frac{1}{\rho}\frac{\mathrm{d}\rho}{\mathrm{d}r}=\frac{1}{r\sqrt{f(r)}},
\end{equation}
with integrated form given by Eq.~\eqref{expKiselev}. Up to the scale $\rho_0$, this is the unique radial coordinate that yields a conformally flat spatial metric for the class \eqref{ds}. The functional $I(r)$ in Eq.~\eqref{eq:I(r)} measures the departure from $\rho\sim r$ and isolates the dependence on the matter content through $f(r)$.

\paragraph{Asymptotics, normalization, and global charges.}
The constant $\rho_0$ in \eqref{expKiselev} fixes the overall scale. In asymptotically flat cases one can choose $\rho_0$ so that $\rho/r\to 1$ as $r\to\infty$. Then $\Phi(\rho)\to 1$, and the asymptotic expansion of $\Phi$ yields the ADM mass \cite{arnowitt2008dynamics}:
\begin{equation}
\Phi(\rho)=1+\frac{M_{\mathrm{ADM}}}{2\rho}+O(\rho^{-2}).
\end{equation}
Equivalently, $M_{\mathrm{ADM}}$ is set by the coefficient of $1/\rho$ in the expansion of $r(\rho)/\rho$.

For non-asymptotically-flat cases (e.g.\ K\"ottler-like static patches), the same scaling can be fixed relative to a reference point or to the cosmological-horizon endpoint $\rho_\uparrow$. In that setting $\rho_0$ is best viewed as a convenient scale for the finite interval $(\rho_+,\rho_\uparrow)$.

\paragraph{Horizon behavior and coordinate regularity.}
In curvature coordinates the spatial metric diverges at a nonextremal horizon because $f^{-1}\to\infty$ as $r\to r_+^+$. In isotropic coordinates the spatial metric is $\Phi^4\delta_{ij}$, and the horizon lies at a finite radius $\rho_+$. By Theorem~\ref{thm:main}, $\rho(r)$ is strictly monotone on the exterior static region. The physical branch is therefore single-valued.

The local behavior depends on the horizon order. For a simple root, $\rho(r)\to\rho_+$ with a square-root law (Lemma~\ref{lem:SimpleHorizon}), and
$r(\rho)-r_+\propto(\rho-\rho_+)^2$.
Thus $r(\rho)$ has a stationary point at the horizon, but the inverse remains regular on the exterior branch. For extremal horizons, logarithmic terms can appear in $\ln\rho$ (Remark~\ref{rem:Extremal}). This can reduce analyticity of $r(\rho)$ and degrade convergence of local inversion series. The integral defining the map still exists, but the inverse is less regular.

%===============================================================
%===============================================================

\subsection{Parameter dependence, applications, and implementation}

For the multi-source family \eqref{f(r)}, the map depends on the matter content only through $f(r)$ and hence through $(r_g,K_i,w_i)$. Two points are useful.

\emph{(i) Separation of radial scalings.}
Each source contributes a term $r^{-(3w_i+1)}$ to $f(r)$. The series in Eq.~\eqref{eq:I(r)} combines these contributions through multinomial products. On a chosen domain one can treat $I(r)$ as an expansion in $\xi(r)=1-f(r)$. This isolates how each component modifies the geometry relative to Schwarzschild/RN.

\emph{(ii) Endpoints of the static region.}
If $f(r)>0$ for all large $r$ (asymptotically flat/AdS cases), then $\rho\in(\rho_+,\infty)$ and the Euclidean normalization at infinity is available. If there is an outer horizon (e.g.\ de Sitter-like asymptotics), then $\rho$ lies in a finite interval. In that case the nearest complex singularity of $r(\rho)$ need not coincide with the real endpoints. This motivates the convergence diagnostics in \S\ref{sec:InvConvergence}, which identify where a local inversion series is reliable.

Many calculations use a background-plus-perturbations split: field scattering, quasinormal modes, lensing/shadows, and waveform corrections from environmental structure \cite{Cardoso2019nature,Barack2019roadmap,Cole2023environments,Speri2023accretion}. In curvature coordinates the factor $f^{-1}(r)\mathrm{d}r^2$ is singular at horizons and can complicate near-horizon analysis.

Isotropic coordinates give two practical benefits.

\emph{(i) Conformally flat spatial data.}
The spatial metric is conformally flat, which aligns with standard conformal methods for specifying initial data and checking constraints. This is useful for perturbations on non-vacuum backgrounds and for numerical implementations built on conformal decompositions.

\emph{(ii) Compatibility with conformal numerical methods.}
Many approaches assume or exploit conformally flat spatial slices. The explicit map $\rho(r)$ and the inversion method provide a direct route to represent multi-source backgrounds in that framework.

For practical implementation, the computational workflow described in Sec.~\ref{sec:error_analysis_double_truncation} is summarized step by step in Appendix~\ref{app:pseudocode}, including prerequisites, patchwise local inversion, and refinement checks.

%===============================================================
%===============================================================

\subsection{Limitations and scope}

We assume static spherical symmetry and the restricted form \eqref{ds} with $g_{tt}=-f$ and $g_{rr}=f^{-1}$ on the exterior static region. This includes many phenomenological models but excludes the general static spherical line element with independent functions $f(r)$ and $h(r)$, i.e.\ $g_{rr}=h(r)^{-1}$. The defining ODE extends directly, with $(r\sqrt{h(r)})^{-1}$ replacing $(r\sqrt{f(r)})^{-1}$. The series organization would then depend on the chosen functional form of $h$.

Rotation and time dependence require different constructions. For stationary axisymmetric metrics one typically uses quasi-isotropic coordinates. For time-dependent cosmological black holes, isotropic-type coordinates can be defined only after specifying a different ansatz. The present results provide the static spherical case needed in such settings.

%############################################################
%#################   Conclusion   ##########################
%###########################################################

\section{Conclusion}\label{sec:conclusion}

The main result of this work is a constructive isotropic-coordinate map for the class \eqref{ds} with generalized Kiselev-type metric function \eqref{f(r)}, together with a practical route for recovering $r(\rho)$ on the exterior static region. Many analytic and numerical tasks simplify when the $t=\mathrm{const}$ spatial metric is conformally flat, whereas curvature (areal-radius) coordinates obscure this structure and introduce a coordinate divergence in the spatial line element at nonextremal horizons through $f(r)^{-1}$.

Our construction, given in Theorem~\ref{thm:main}, provides an isotropic radius $\rho$ for the class \eqref{ds} such that
\begin{equation*}
\gamma_{ij}\,\mathrm{d}x^i\mathrm{d}x^j
=
\Phi(\rho)^4(\mathrm{d}\rho^2+\rho^2\mathrm{d}\Omega^2),
\qquad
\Phi(\rho)^2=\frac{r(\rho)}{\rho}.
\end{equation*}
Up to an overall scale, the map is fixed by the defining first-order relation
\begin{equation*}
\frac{1}{\rho}\frac{\mathrm{d}\rho}{\mathrm{d}r}
=
\frac{1}{r\sqrt{f(r)}}
\end{equation*}
and its integrated form \eqref{expKiselev}, so that the dependence on the matter content is encoded through $f(r)$ alone, including the multi-source family \eqref{f(r)}.

On the physical exterior branch, the map is well defined and strictly monotone by Theorem~\ref{thm:main}. Nonextremal horizons are sent to a finite isotropic radius, so the spatial metric no longer exhibits the curvature-coordinate blow-up. The distinction between simple and extremal horizons is also important in practice, since it governs the local regularity of the inverse map and the behavior of local series methods.

A second outcome of the paper is constructive access to $r(\rho)$, not only to $\rho(r)$. The inversion framework developed in Sec.~\ref{SecLagInv}, together with the truncation and validation strategy of Sec.~\ref{sec:error_analysis_double_truncation}, provides workable local representations even when no closed form is available. This gives controlled local access to the inverse map for geometric diagnostics, perturbative calculations, and numerical implementations, with additional diagnostics and implementation details, including pseudocode, collected in the appendices.

Overall, the isotropic map provides a common framework for the representation and analysis of Schwarzschild-like backgrounds with multiple sources. The conformal factor $\Phi(\rho)$ determines the spatial geometry. The nontrivial part of the coordinate change is encoded in the deviation from $\rho\sim r$, organized by $I(r)$. Within the class \eqref{ds}, horizons occur at finite isotropic radius, the dependence on matter parameters enters through $f(r)$, and the inverse map can be computed systematically for analytic and numerical calculations.

%############################################################
%#################   Data Availability  ####################
%###########################################################

\section*{Data Availability Statement}
Data sharing is not applicable to this article as no new data were created or analyzed in this study.

%############################################################
%###################     Appendix    #######################
%###########################################################

\appendix

%===============================================================
%===============================================================

\section{Notation}
\label{app:notation}

For convenience, we collect here the principal notations used throughout the paper.
\begin{center}
\renewcommand{\arraystretch}{1.2}
\begin{longtable}{p{0.16\textwidth} p{0.46\textwidth} p{0.30\textwidth}}
\hline
\textbf{Symbol} & \textbf{Meaning} & \textbf{Comments} \\
\hline
\endfirsthead

\hline
\textbf{Symbol} & \textbf{Meaning} & \textbf{Comments} \\
\hline
\endhead

\hline
\endfoot

$t$ & Static time coordinate & Unchanged under the curvature $\leftrightarrow$ isotropic radial transformation. \\

$r$ & Areal (curvature) radius & Defined by the area of symmetry 2-spheres, $4\pi r^2$; $r>0$. \\

$\rho$ & Isotropic radius & Defined implicitly by $\rho(r)=\rho_0 r e^{I(r)}$ on the static domain. \\

$\theta,\phi$ & Angular coordinates on the unit 2-sphere & Standard spherical coordinates. \\

$\mathrm{d} s^2$ & Spacetime line element & Metric interval. \\

$\mathrm{d} \Omega^2$  & Metric on the unit 2-sphere & $\mathrm{d} \Omega^2=\mathrm{d}\theta^2+\sin^2\theta\,\mathrm{d}\phi^2$. \\

$g_{\mu\nu}$ & Spacetime metric & Mostly-plus signature $(-,+,+,+)$. \\

$R^{\mu}{}_{\nu}$ & Ricci tensor & Appears in the Einstein equations. \\

$R$ & Ricci scalar & Scalar curvature. \\

$S$ & Action & Einstein-Hilbert plus matter action. \\

$T^{\mu}{}_{\nu}$ & Total stress--energy tensor &  Here, $T^{\mu}{}_{\nu}=\mathrm{diag}(-\varepsilon,p_r,p_\perp,p_\perp)$. \\

$T^{\mu}{}_{\nu(i)}$ & Stress--energy tensor of the $i$th component & Individual noninteracting anisotropic-fluid contribution. \\

$\varepsilon$  & Energy density & $\varepsilon>0$. \\

$p_r$ & Radial pressure & Principal pressure in the radial direction. \\

$p_\perp$ & Tangential pressure & Degenerate angular principal pressure. \\

$w_i$ & Effective equation-of-state parameter of the $i$th component & Admissible range depends on the convention in Sec.~\ref{sec:setup}; $w_i=0$ is degenerate with the Schwarzschild sector. \\

$N$ & Number of nonvacuum matter components & Assumed finite. \\

$f(r)$ & Metric function in curvature coordinates &  \\

$F(\rho)$ & Squared lapse function in isotropic coordinates & $g_{tt}=-F(\rho).$ \\

$\Phi(\rho)$ & Conformal factor of the spatial metric in isotropic coordinates & \\

$M$ & Black hole mass parameter &  \\

$r_g$ & Gravitational radius & $r_g=2M$. \\

$K_i$ & Amplitude of the $i$th Kiselev-type matter term & May have either sign in the general ansatz; controls the strength of the corresponding power-law contribution. \\

$\Lambda$ & Cosmological constant & Used in the K\"ottler example. \\

$Q$ & Electric charge parameter & Used in the RN and RN$+$Kiselev examples. \\

$r_+$ & Event horizon & Usually the lower endpoint of the static exterior domain when present. \\

$r_\uparrow$ & Upper endpoint of the static domain & Equals $+\infty$ for asymptotically flat/AdS cases, or the cosmological horizon radius when a bounded static patch exists. \\

$\mathcal{S}$ & Exterior static domain & Typically $\mathcal{S}=(r_+,r_\uparrow)$. \\

$\overline{\mathcal{S}}$ & Closure of the static radial domain $\mathcal{S}$ & Includes the boundary points of $\mathcal{S}$, such as $r_+$ or $r_{\uparrow}$ when they exist. \\

$r_\ast$ & Base point for the local Lagrange-B\"urmann inverse & Must satisfy $f(r_\ast)>0$; used in Sec.~\ref{SecLagInv} and Algorithm~\ref{alg:inverse-lagrange-consistent}. \\

$\rho_0$ & Multiplicative normalization constant in the isotropic map & Fixes the overall scale of $\rho$; chosen conveniently in special cases. \\

$\rho_+$ & Isotropic radius corresponding to $r_+$ & Finite in standard nonextremal examples. \\

$\rho_\uparrow$ & Upper endpoint of the isotropic radial interval & Image of $r_\uparrow$ under the map $\rho(r)$. \\

$\Psi$ & Forward radial map used in the inversion section & Usually $\Psi(r)=\rho(r)$. \\

$g(r)$ & Auxiliary function in the Lagrange inversion formula & $g(r)=\Psi'(r)/\Psi(r)=1/\bigl(r\sqrt{f(r)}\bigr)$. \\

$I(r)$ & Logarithmic correction in the isotropic map & Defined by $I(r)=\int \frac{\mathrm{d} x}{x}\bigl(f(x)^{-1/2}-1\bigr)$, so that $\rho(r)=\rho_0 r e^{I(r)}$. \\

$\xi(r)$ & Deviation from unity in the metric function & Defined by $f(r)=1-\xi(r)$ in the series expansion approach. \\

$\xi_n(r)$ & Individual term in the decomposition of $\xi(r)$ & For example $\xi_0(r)=r_g/r$ and $\xi_n(r)=K_n/r^{s_n}$. \\

$s_n$ & Power exponent of the $n$th source term in $\xi_n(r)=K_n/r^{s_n}$ & In Kiselev-type terms  $s_n=3w_n+1$. \\

$\mathbf m=(m_0,m_1,\dots,m_N)$ & Multi-index used in the multinomial expansion & All $m_i\ge 0$ and $|\mathbf m|=\sum_i m_i$. \\

$|\mathbf m|$ & Order of the multi-index & Equals the total multinomial order $k$. \\

$m_0,m_1,\dots,m_N$ & Components of the multi-index $\mathbf m$ & Count how often each term appears in the expansion. \\

$C(\mathbf m)$ & Combinatorial coefficient in the multi-index expansion & Built from binomial/multinomial factors. \\

$\alpha(\mathbf m)$ & Total radial exponent associated with $\mathbf m$ & Governs the power $x^{-1-\alpha(\mathbf m)}$ appearing under the integral. \\

$c_j$ & Coefficients in the generalized power sum $f(r)=\sum_{j=1}^m c_j r^{\alpha_j}$ & Used in the root-counting argument. \\

$\alpha_j$ & Distinct real exponents in the generalized power sum & After merging coincident powers; determine the Chebyshev-system structure. \\

$m$ & Number of distinct exponents in the generalized power sum & Gives the upper bound $m-1$ on the number of positive roots. \\

$h(x)$ & Exponential sum obtained after the change of variable $x=\ln r$ & Defined by $h(x)=f(e^x)$. \\

$x$ & Auxiliary radial variable & Used either as an integration dummy variable or as $x=\ln r$ in the root-counting proof. \\

$W(x)$ & Wronskian of the exponential basis $\{e^{\alpha_j x}\}$ & Nonzero for distinct $\alpha_j$, proving the Chebyshev property. \\

$r_1,r_2,r_3$ & The three real roots in the K\"ottler factorization & Ordered in the draft as $r_1>r_2>0>r_3$ for the de Sitter case. \\

$\phi(r)$ & Amplitude used in the elliptic-integral representation for the K\"ottler map & Defined from the radial variable in the K\"ottler example. \\

$k$ & Elliptic modulus / series order (context dependent) & In the K\"ottler example, $k^2=\frac{-\,r_3(r_1-r_2)}{\,r_1(r_2-r_3)\,}\in(0,1)$; elsewhere $k$ is also the summation order in series expansions. \\

$F(\phi,k)$ & Incomplete elliptic integral of the first kind & Used for the explicit K\"ottler isotropic map. \\

$u(\rho)$ & Auxiliary elliptic-function variable in the K\"ottler inversion & Defined from $\ln(\rho/\rho_+)$. \\

$\mathrm{sn}(u,k)$ & Jacobi elliptic sine & Appears in the inversion of the K\"ottler map. \\

$\Psi'(r_\ast), \Psi''(r_\ast),\Psi'''(r_\ast)$ & Derivatives of the forward map at the expansion point & Enter the low-order Lagrange-{B\"u}rmann inverse coefficients. \\

$c_1,c_2,c_3$ & First coefficients of the local inverse series $r(\rho)$ & Given explicitly in the inversion section. \\

$\mathcal{A}$ & Auxiliary quantity introduced in the inversion formulas & Defined in the Lagrange-inversion section. \\

$R$ & Radius of convergence of the inverse series & Determined by the nearest obstruction to analytic continuation. \\

$c_n$ & Generic coefficients of the inverse series & Used in the convergence diagnostics. \\

$\varepsilon_{\mathrm I}^{(k_{\max})}$ & Forward-map truncation error & Error induced by truncating the multi-index expansion of $I(r)$ at total order $|\mathbf m|\le k_{\max}$. \\

$\varepsilon_{\mathrm L}^{(N)}$ & Inverse-series truncation error & Error induced by truncating the Lagrange inverse series for $r(\rho)$ at order $N$. \\

$\mathbb{R},\mathbb{N},\mathbb{Q},\mathbb{Z}$ & Standard number sets & Real, natural, rational, and integer numbers, respectively. \\

\hline
\end{longtable}
\end{center}

%===============================================================
%===============================================================

\section{A sharp bound on the number of positive roots of the generalized Kiselev-type metric function}
\label{app:rootbound}

Consider the metric function \eqref{f(r)} on $r>0$.
Although the exponents $3w_i+1$ may be fractional (or even irrational), the number of \emph{positive} real roots of $f$ admits a simple universal bound that depends only on the number of \emph{distinct} power-law terms in $f$.

\begin{prop}[Maximum number of positive roots]
\label{prop:max_positive_roots}
Let $m$ be the number of distinct exponents among $\{0,-1,-(3w_i+1)\}_{i=1}^{N}$ after merging coincident values (and deleting zero coefficients). Then the equation $f(r)=0$ has at most $m-1$ solutions in $r>0$, counted with multiplicity.
\end{prop}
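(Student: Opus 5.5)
We plan to prove this by Descartes' rule of signs for generalized polynomials, in its Rolle-type (Pólya–Szegő) form. The argument is an induction on $m$. First rewrite
\[
f(r)=\sum_{j=1}^{m} c_j\, r^{\alpha_j}, \qquad c_j\neq 0,\qquad \alpha_1<\alpha_2<\cdots<\alpha_m ,
\]
where the $\alpha_j$ are the distinct real exponents obtained after merging coincident powers among $\{0,-1,-(3w_i+1)\}$ and deleting any term whose coefficient vanishes. No integrality of the exponents is needed. Positive roots of $f$ and their multiplicities are unchanged if we multiply by $r^{-\alpha_1}>0$, since that factor is smooth and nonvanishing on $r>0$. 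It is convenient to pass to $x=\ln r$, with $h(x)=f(e^x)=\sum_j c_j e^{\alpha_j x}$. The map $r=e^x$ is a diffeomorphism of $(0,\infty)$ onto $\mathbb{R}$, so it preserves both the number of zeros and their multiplicities.

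The induction runs as follows. For $m=1$, $h(x)=c_1e^{\alpha_1 x}$ has no zeros, which gives the bound $0$. For the inductive step, let $Z(h)$ denote the number of real zeros of $h$ counted with multiplicity. Set
\[
\tilde h(x)=e^{-\alpha_1 x}h(x)=c_1+\sum_{j\ge2}c_j e^{(\alpha_j-\alpha_1)x}.
\]
Then $Z(\tilde h)=Z(h)$. Differentiating kills the constant term:
\[
\tilde h'(x)=\sum_{j\ge2}c_j(\alpha_j-\alpha_1)e^{(\alpha_j-\alpha_1)x}.
\]
This is again an exponential sum, now with $m-1$ distinct exponents and nonzero coefficients, because $\alpha_j\neq\alpha_1$. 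By the induction hypothesis, $Z(\tilde h')\le m-2$. The generalized Rolle theorem then gives $Z(\tilde h)\le Z(\tilde h')+1$: between consecutive distinct zeros of $\tilde h$ there is a zero of $\tilde h'$, and a zero of $\tilde h$ of multiplicity $k$ is a zero of $\tilde h'$ of multiplicity $k-1$. Hence $Z(h)\le m-1$.

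The main point requiring care is the multiplicity-counting version of Rolle's theorem. Suppose $\tilde h$ has distinct zeros $x_1<\dots<x_p$ with multiplicities $k_1,\dots,k_p$. Then $\tilde h'$ has zeros at each $x_i$ of total multiplicity $\sum(k_i-1)$, together with at least $p-1$ further zeros strictly between consecutive $x_i$. Summing gives $Z(\tilde h')\ge \sum k_i-1$. One must also confirm that $Z(h)$ is finite, so that this counting makes sense. Finiteness follows from analyticity: $h$ is real-analytic and not identically zero. It is not identically zero because the functions $e^{\alpha_j x}$ with distinct $\alpha_j$ are linearly independent, which can be seen from the nonvanishing Wronskian $W(x)$ (a Vandermonde determinant times an exponential) or from the asymptotics as $x\to\pm\infty$. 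Consequently its zeros are isolated and of finite order. This leaves the possibility of infinitely many zeros accumulating at $\pm\infty$, which the argument must also exclude. That case is handled by applying the counting to the finitely many zeros in any bounded interval: the bound $m-1$ holds uniformly for every finite subset of zeros, so it also holds for the total.

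Finally, we would remark that the bound is sharp. Choosing coefficients so that $h$ interpolates any prescribed $m-1$ simple zeros is possible by the Chebyshev-system property. In the setting of the main text, this finiteness is exactly what guarantees that the static region $\mathcal{S}$ has well-defined endpoints.
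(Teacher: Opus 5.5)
Your proof is correct, and it takes a different route from the paper's. Both proofs make the same reduction $x=\ln r$, $h(x)=\sum_j c_j e^{\alpha_j x}$. The paper then invokes the classical fact that $\{e^{\alpha_j x}\}$ is an extended Chebyshev system, citing Karlin--Studden and Pinkus, and offers the Vandermonde form of the full Wronskian as a ``certificate.'' You instead prove the zero bound from scratch by the Rolle induction: strip off $e^{\alpha_1 x}$, differentiate to remove one term, and use $Z(\tilde h)\le Z(\tilde h')+1$ with multiplicities. You handle finiteness by applying the count to arbitrary finite sets of zeros. This makes the argument self-contained at exactly the step where the paper leans on the literature. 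Nonvanishing of the single full Wronskian does not by itself bound the number of zeros: for $\{\cos x,\sin x\}$ one has $W\equiv 1$ on $\mathbb{R}$, yet $\cos x$ has infinitely many zeros. What is really needed is a P\'olya-type factorization of the associated differential operator, i.e.\ nonvanishing of all nested Wronskians. Your step $\tilde h'=\frac{d}{dx}\bigl(e^{-\alpha_1 x}h\bigr)$ is that factorization specialized to exponentials. Your route also extends easily. Factoring out $e^{\beta x}$, with $\beta$ strictly between two consecutive exponents whose coefficients differ in sign, removes one sign change per differentiation. This yields the Descartes bound by the number of sign changes of $(c_1,\dots,c_m)$, which can be strictly smaller than $m-1$. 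For example, $f=1-r_g/r-K/r^{2}$ with $K>0$ has one sign change and exactly one positive root. The paper's version is shorter and places the result in the Chebyshev-system framework, where the sharpness remark in your closing paragraph naturally belongs.
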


\begin{proof}
\medskip
\noindent\emph{Reduction to an exponential sum.}

Write \eqref{f(r)} as a finite linear combination of real powers on $(0,\infty)$:
\begin{equation}
f(r)=\sum_{j=1}^{m} c_j\, r^{\alpha_j},
\label{eq:gen_poly_form}
\end{equation}
where the exponents $\alpha_j\in\mathbb{R}$ are \emph{distinct} (combine any coincident exponents into a single term, and discard any term with zero coefficient). In our case, the candidate exponent set is
\begin{equation}
\{0,\,-1,\,-(3w_1+1),\ldots,-(3w_N+1)\},
\end{equation}
and $m$ denotes the number of distinct exponents actually present after merging coincident values.

Introduce the logarithmic coordinate $x=\ln r$ (a bijection between $r\in(0,\infty)$ and $x\in\mathbb{R}$), and define
\begin{equation}
h(x):=f(e^x)=\sum_{j=1}^{m} c_j\, e^{\alpha_j x}.
\label{eq:exp_sum}
\end{equation}
Then $f(r)=0$ for some $r>0$ if and only if $h(x)=0$ for $x=\ln r$, and the multiplicities are preserved because $r\mapsto \ln r$ is smooth and strictly monotone.

\medskip
\noindent\emph{Chebyshev-system property and the $m-1$ zero bound.}

The family $\{e^{\alpha_1 x},\ldots,e^{\alpha_m x}\}$ with distinct real $\alpha_j$ is a classical \emph{(extended) Chebyshev system} on any interval (indeed, on all of $\mathbb{R}$) \citep{karlinStudden1966,pinkus2010}. One convenient certificate is the Wronskian:
\begin{align}
W(x)
=\det\!\Bigl(\frac{d^{k-1}}{dx^{k-1}}e^{\alpha_j x}\Bigr)_{k,j=1}^{m}
&=\det\!\bigl(\alpha_j^{k-1}e^{\alpha_j x}\bigr)_{k,j=1}^{m}\nonumber\\
=e^{(\alpha_1+\cdots+\alpha_m)x}\,
\det\!\bigl(\alpha_j^{k-1}\bigr)_{k,j=1}^{m}
&=e^{(\alpha_1+\cdots+\alpha_m)x}\,\prod_{1\le p<q\le m}(\alpha_q-\alpha_p),
\label{eq:wronskian_exp}
\end{align}
which is nonzero for all $x\in\mathbb{R}$ when the $\alpha_j$ are distinct. A standard consequence of the Chebyshev-system property is that any nontrivial linear combination \eqref{eq:exp_sum} has at most $m-1$ real zeros on $\mathbb{R}$, counting multiplicity \citep{karlinStudden1966,pinkus2010}. Therefore the original function $f$ has at most $m-1$ positive zeros on $(0,\infty)$.
\end{proof}

\paragraph{Generic Kiselev sums.}
In the generic situation where all $w_i$ are distinct, all $K_i\neq 0$, and no Kiselev exponent coincides with the $r^{-1}$ term (i.e.\ no $w_i=0$), we have $m=N+2$, hence
\begin{equation}
\#\{r>0:\ f(r)=0\}\ \le\ N+1
\qquad\text{(counting multiplicity)}.
\label{eq:generic_bound}
\end{equation}
If some $w_i$ coincide (or if some $w_i=0$ so that $r^{-(3w_i+1)}=r^{-1}$ merges with the $1/r$ term), then $m$ decreases accordingly and the bound strengthens.

\begin{remark}
This bound is purely structural: it does \emph{not} require any sign restriction on $K_i$, nor integrality/rationality of exponents. It is also essentially sharp in the sense that exponential sums with $m$ terms can realize $m-1$ simple real zeros for suitable coefficients \citep{novikovShapiro2018}.
\end{remark}

%===============================================================
%===============================================================

\section{K\"ottler Metric in Isotropic Coordinates}\label{app:Kottler-isotropic}

In this appendix we derive the K\"ottler metric in isotropic coordinates used in Sec.~\ref{sec:Kottler_example}. We use the incomplete elliptic integral of the first kind and the Jacobi elliptic sine,
\begin{align}
    &u = F(\phi, k) := \int_0^{\phi}\frac{\mathrm{d}\theta}{\sqrt{\,1-k^2\sin^2\theta\,}},\\
    &\mathrm{sn}(u, k) := \sin\phi.
\end{align}
The K\"ottler solution of Einstein's equations is
\begin{equation}
    \mathrm{d} s^2 = -\Bigl( 1 - \frac{r_g}{r} - \frac{\Lambda}{3}r^2 \Bigr)\,\mathrm{d} t^2
    + \Bigl( 1 - \frac{r_g}{r} - \frac{\Lambda}{3}r^2 \Bigr)^{-1} \mathrm{d} r^2
    + r^2\mathrm{d}\Omega^2.
\end{equation}
Passing to isotropic radius \(\rho\) via
\begin{equation}
    \Bigl( 1 - \frac{r_g}{r} - \frac{\Lambda}{3}r^2 \Bigr)^{-1}\mathrm{d} r^2
    = \Bigl(\frac{r}{\rho}\Bigr)^2 \mathrm{d} \rho^2,
    \qquad
    r^2 \mathrm{d}\Omega^2 = \Bigl(\frac{r}{\rho}\Bigr)^2 \rho^2 \mathrm{d}\Omega^2,
\end{equation}
gives the differential equation
\begin{equation}
    \frac{\mathrm{d} \rho}{\rho}
    =\frac{\mathrm{d} r}{r\sqrt{1-\frac{r_g}{r}-\frac{\Lambda}{3}r^2}}.
\end{equation}
Factor the cubic under the root:
\begin{equation}
    1-\frac{r_g}{r}-\frac{\Lambda}{3}r^2
    = -\frac{\Lambda}{3r}\,(r-r_1)(r-r_2)(r-r_3),
\end{equation}
where \(r_i\) are the three real roots of
\begin{equation}
    -\frac{\Lambda}{3}r^3 + r - r_g = 0,
    \qquad
    r_1>r_2>0>r_3,
    \qquad
    r_1+r_2+r_3=0.
\end{equation}
Recall that we identify the horizons as
\(r_{\uparrow}:=r_1,\; r_+:=r_2\),
so that for \(\Lambda>0\) the static region is
\(\mathcal{S}=(r_+,r_{\uparrow})=(r_2,r_1)\).
Choosing \(\rho(r_2)=\rho_+\), we obtain
\begin{equation}
    \ln\frac{\rho}{\rho_+}
    =\sqrt{\frac{3}{\Lambda}}
    \int_{r_2}^{r}\frac{\mathrm{d} x}{\sqrt{-\,x(x-r_1)(x-r_2)(x-r_3)}}.
\end{equation}
Introduce the Legendre normal form with
\begin{equation}
    \phi(r)=\arcsin\sqrt{\frac{\,r_1(r-r_2)\,}{\,r(r_1-r_2)\,}},
    \qquad
    k^2=\frac{-\,r_3(r_1-r_2)}{\,r_1(r_2-r_3)\,}\in(0,1).
\end{equation}
Then the integral evaluates to the incomplete elliptic integral of the first kind:
\begin{equation}
    \ln\frac{\rho}{\rho_+}
    =\sqrt{\frac{3}{\Lambda}}\;
    \frac{2}{\sqrt{\,r_1(r_2-r_3)\,}}\;
    F\!\bigl(\phi(r), k\bigr).
\end{equation}
Equivalently,
\begin{equation}
    \rho(r)=\rho_+\,
    \exp\!\left[
    \sqrt{\frac{3}{\Lambda}}\;
    \frac{2}{\sqrt{\,r_1(r_2-r_3)\,}}\;
    F\!\left(
    \arcsin\sqrt{\frac{\,r_1(r-r_2)\,}{\,r(r_1-r_2)\,}},
    k
    \right)\right].
\end{equation}
For the physical static region \(r\in(r_2,r_1)=\mathcal{S}\) (between the event and cosmological horizons when \(\Lambda>0\)), the radicand in the quartic root is positive, so the transformation is real on the interval relevant for Theorem~\ref{thm:main}.

To invert the coordinate transformation, define
\begin{equation}
    u(\rho)=\frac{\sqrt{\,r_1(r_2-r_3)\,}}{2}\,\sqrt{\frac{\Lambda}{3}}\,\ln\frac{\rho}{\rho_+}.
\end{equation}
Then \(\sin\phi=\mathrm{sn}(u(\rho),k)\), and hence
\begin{equation}
\frac{r_1(r-r_2)}{r(r_1-r_2)}
=\mathrm{sn}^2\!\bigl(u(\rho),k\bigr).
\end{equation}
Solving for \(r\) yields the Jacobi form
\begin{equation}\label{eq:Jacobi}
    r(\rho)=\frac{\,r_1 r_2\,}{\;r_1-(r_1-r_2)\,\mathrm{sn}^2\!\bigl(u(\rho), k\bigr)\;}.
\end{equation}
Substituting \(r(\rho)\) back into the line element gives the K\"ottler metric in isotropic (conformally flat) form
\begin{equation}
    \mathrm{d} s^2_{\mathrm{Kottler}}
    = -\Bigl[ 1-\frac{r_g}{r(\rho)}-\frac{\Lambda}{3}r(\rho)^2 \Bigr]\,\mathrm{d} t^2
    \;+\;
    \Bigl(\frac{r(\rho)}{\rho}\Bigr)^{\!2}\,\bigl(\mathrm{d} \rho^2+\rho^2\mathrm{d}\Omega^2\bigr).
\end{equation}

%===============================================================
%===============================================================

\section{Evaluating derivatives of $g$ from the multi-index expansion}\label{app:LagrangeAux}

In any region where the expansion of $f(r)^{-1/2}=(1-\xi(r))^{-1/2}$ used in the proof of Proposition~\ref{prop:multiindex} (see Eq.~\eqref{expansion1}) is valid
(in particular, where the chosen branch of $f^{-1/2}$ is analytic and $|\xi(r)|<1$), the quantity
$g(r)=\bigl(r\sqrt{f(r)}\bigr)^{-1}$ admits the corresponding multi-index representation.

Let $\mathbf m=(m_0,m_1,\dots,m_N)\in\mathbb{N}_0^{\,N+1}$ be a multi-index with
$|\mathbf m|:=\sum_{i=0}^N m_i$. Using the notation of Proposition~\ref{prop:multiindex}, namely
\begin{equation}
  s_i:=3w_i+1,
\end{equation}
together with the associated quantities $\alpha(\mathbf m)$ and $C(\mathbf m)$, we obtain
\begin{equation}\label{eq:gMulti}
  g(r)=\frac{1}{r\sqrt{f(r)}}
  =\sum_{\mathbf m}C(\mathbf m)\,r_g^{m_0}\Bigl(\prod_{i=1}^{N}K_i^{m_i}\Bigr)\,r^{-1-\alpha(\mathbf m)},
\end{equation}
where the sum includes the term $|\mathbf m|=0$ (which reproduces the leading $1/r$ behavior).
Differentiating the series term by term gives, for $j\in\mathbb{N}$,
\begin{equation}\label{eq:gDerivsMulti}
  g^{(j)}(r)=(-1)^j\sum_{\mathbf m}C(\mathbf m)\,r_g^{m_0}\Bigl(\prod_{i=1}^{N}K_i^{m_i}\Bigr)\,
  \left(\prod_{\ell=1}^{j}\bigl(\alpha(\mathbf m)+\ell\bigr)\right)\,r^{-1-\alpha(\mathbf m)-j}.
\end{equation}
Equations \eqref{eq:rLowOrder}--\eqref{eq:gDerivsMulti} therefore provide explicit Taylor approximations to $r(\rho)$ about any base point $r_\ast$ lying in a region where the expansion in Eq.~\eqref{expansion1} is valid. In practice, for many standard parameter regimes in physical applications, the expansion in Eq.~\eqref{expansion1} is valid on open neighborhoods of regular points in the exterior static region; the formulas are therefore directly usable at such points.

%===============================================================
%===============================================================

\section{Analytic continuation and truncation of the inverse series}\label{app:InvConvergence}

\subsection{Analytic radius and obstructions to continuation}

Theorem~\ref{LagrangeInversion} guarantees that the Lagrange-B{\"u}rmann series
\eqref{eq:KiselevInv} has a nonzero radius of convergence at any base point $r_\ast$ with
$\Psi'(r_\ast)\neq 0$. We now (i) describe what controls the convergence disk in the
$\rho$-plane, and (ii) summarize practical diagnostics for the convergence radius and asymptotic
rate.

Let $r(\rho)$ denote the analytic continuation of the local inverse of $\Psi(r)=\rho(r)$
constructed near $\rho_\ast=\Psi(r_\ast)$. For a holomorphic function, the radius of
convergence of the Taylor expansion of $r(\rho)$ about $\rho_\ast$ equals the distance from
$\rho_\ast$ to the nearest point at which $r(\rho)$ cannot be continued as a single-valued
holomorphic function.

In complex analysis, the radius of convergence of the Taylor series of $r(\rho)$ about $\rho_\ast$ is the distance from $\rho_\ast$ to the nearest point at which $r(\rho)$ cannot be continued as a single-valued holomorphic function. Defining the singular set
\begin{equation}
\Sigma_{\rho}:=\Bigl\{\rho\in\mathbb{C}:\ r(\rho)\ \text{cannot be continued as a single-valued holomorphic function in a neighborhood of }\rho\Bigr\},
\end{equation}
one therefore has
\begin{equation}
|\rho-\rho_\ast|<R,\qquad
R=\operatorname{dist}(\rho_\ast,\Sigma_{\rho})
=\inf_{\rho_s\in\Sigma_{\rho}}|\rho_s-\rho_\ast|\,.
\label{eq:Rdef}
\end{equation}

On the real static interval $\mathcal{S}$, Eq.~\eqref{eq:g-def}, Eq.~\eqref{eq:PsiODE}, and the positivity of both $f(r)$ and $\rho(r)$ imply that $\Psi'(r)\neq 0$. Thus the real inverse has no critical-point obstruction on $\rho(\mathcal{S})$.

Under complex continuation, obstructions to analytic continuation of the inverse can arise from singularities or branch points of $\Psi(r)$, induced by those of $g(r)=1/(r\sqrt{f(r)})$, namely from $r=0$, from zeros or branch points of $f(r)$ (including complex roots and branch cuts when $s_i=3w_i+1\notin\mathbb{Z}$). The nearest such obstruction in the $\rho$-plane controls $R$ in \eqref{eq:Rdef}; importantly, the dominant obstruction need not lie on the real axis.

By Theorem~\ref{thm:main}, $\Psi=\rho$ is strictly increasing on $\mathcal{S}$ and thus admits a unique real inverse on $\rho(\mathcal{S})$. Nevertheless, the one-sided endpoint values of $\rho$ associated with finite endpoints of $\mathcal{S}$ need not coincide with the nearest singularity of the analytic continuation of $r(\rho)$. The next lemma shows that a nonextremal horizon produces $\mathrm{d}r/\mathrm{d}\rho=0$ at $\rho=\rho_h$, but does not by itself obstruct analyticity of the inverse.

%===============================================================

\subsection{Coefficient diagnostics: Cauchy--Hadamard and Domb--Sykes}\label{app:CauchyHadamard}

Let
\begin{equation}
r(\rho)=r_\ast+\sum_{n=1}^{\infty}c_n(\rho-\rho_\ast)^n
\label{eq:rSeries}
\end{equation}
denote the inverse series about $\rho_\ast$. The radius $R$ in \eqref{eq:Rdef} can be estimated
directly from the coefficients via the Cauchy--Hadamard formula~\cite{Hardy1916-CauchyHadamard},
\begin{equation}
\frac{1}{R}=\limsup_{n\to\infty}|c_n|^{1/n}\,.
\label{eq:CauchyHadamard}
\end{equation}
In practice, for sufficiently large $n$, one may use the ratio and root approximants
\begin{equation}
R^{(\mathrm{ratio})}_n=\left|\frac{c_n}{c_{n+1}}\right|,
\qquad
R^{(\mathrm{root})}_n=|c_n|^{-1/n},
\label{eq:RratioRoot}
\end{equation}
and interpret their stabilization (or lack thereof) as an indicator of the dominant obstruction to analytic continuation.

More detailed information about the nature of the nearest obstruction to analytic continuation,
when that obstruction is algebraic, can be extracted from a Domb--Sykes
analysis~\cite{DombSykes1956,domb1957susceptibility,hunter1980deducing}. Assume that the
closest obstruction occurs at $\rho=\rho_s$ with $|\rho_s-\rho_\ast|=R$ and that near $\rho_s$
the inverse has an algebraic singular expansion of the form
\begin{equation}
r(\rho)=r_{\mathrm{reg}}(\rho)+\mathcal{K}\!\left(1-\frac{\rho-\rho_\ast}{\rho_s-\rho_\ast}\right)^{\lambda},
\qquad \lambda\notin\mathbb{Z}_{\ge 0},
\label{eq:AlgSingAnsatz}
\end{equation}
where $r_{\mathrm{reg}}$ is analytic at $\rho_s$ and $\mathcal{K}$ is a constant. Then the coefficients have the asymptotic form
\begin{equation}
c_n \sim C\,(\rho_s-\rho_\ast)^{-n}\,n^{-\lambda-1}\qquad (n\to\infty),
\label{eq:cnAsymptotic}
\end{equation}
and the coefficient ratios satisfy
\begin{equation}
\frac{c_n}{c_{n-1}}
=
\frac{1}{\rho_s-\rho_\ast}\left(1-\frac{\lambda+1}{n}+O(n^{-2})\right).
\label{eq:DombSykes}
\end{equation}
Thus, when the dominant obstruction is a single real algebraic singularity, a linear fit of
$c_n/c_{n-1}$ versus $1/n$ for large $n$ yields estimates for $R=|\rho_s-\rho_\ast|$ and
$\lambda$. If the closest obstructions occur as a complex conjugate pair
$\rho_s-\rho_\ast=Re^{\pm i\theta}$, then $c_n$ typically exhibits oscillations
$c_n\sim R^{-n}n^{-\lambda-1}\cos(n\theta+\phi)$, in which case the root test
\eqref{eq:RratioRoot} and fits to $|c_n|$ are more robust.

%===============================================================

\subsection{Truncation error and convergence speed}

Let $S_N(\rho)=r_\ast+\sum_{n=1}^{N}c_n(\rho-\rho_\ast)^n$ denote the $N$-th truncation of
\eqref{eq:rSeries}, and define the remainder $E_N(\rho)=r(\rho)-S_N(\rho)$. For any $\widetilde R$ with $0<\widetilde R<R$, the function $r(\rho)$ is holomorphic in the closed disk $|\rho-\rho_\ast|\le \widetilde R$. Define
\begin{equation}
M_{\widetilde R}:=\max_{|\rho-\rho_\ast|=\widetilde R}|r(\rho)|.
\end{equation}
Then Cauchy estimates~\cite{ahlfors1979complex} imply the coefficient bound $|c_n|\le M_{\widetilde R}/\widetilde R^n$ and the uniform remainder estimate
\begin{equation}
|E_N(\rho)|
\le
\frac{M_{\widetilde R}}{\widetilde R-|\rho-\rho_\ast|}\left(\frac{|\rho-\rho_\ast|}{\widetilde R}\right)^{N+1},
\qquad |\rho-\rho_\ast|<\widetilde R.
\label{eq:RemainderBound}
\end{equation}
Hence the convergence is \emph{geometric} in $N$ for any fixed $q:=|\rho-\rho_\ast|/\widetilde R<1$.
If, moreover, the dominant obstruction to analytic continuation is a single algebraic one as in \eqref{eq:AlgSingAnsatz}, then the large-$N$ decay is controlled by the true convergence radius $R$ from \eqref{eq:Rdef}, rather than by the auxiliary radius $\widetilde R$ used in the Cauchy bound. In that case \eqref{eq:cnAsymptotic} implies the refined asymptotic decay
\begin{equation}
|E_N(\rho)| \approx \mathrm{const}\times q^{\,N+1}(N+1)^{-\lambda-1},
\qquad q=\frac{|\rho-\rho_\ast|}{R},
\label{eq:ErrorAsymptotic}
\end{equation}
i.e.\ geometric convergence modulated by an algebraic factor determined by the singularity
exponent $\lambda$.

%===============================================================

\subsection{Practical validation of the truncated inverse}\label{app:series_diagnostics}

To verify the accuracy of the truncated transformation in practice, one may supplement the coefficient diagnostics of \S\ref{app:CauchyHadamard} with direct validation tests.

\paragraph{Richardson extrapolation.}
For validation, compute $r(\rho)$ at successively finer truncations $(k_{\max}, N) = (10, 15)$, $(20, 30)$, $(30, 45)$ and verify that the differences decay geometrically. Richardson extrapolation~\cite{burden2015numerical,baumgarte2010numerical,richardson1911approximate} can then estimate the true value and the convergence rate.

\paragraph{Comparison with exact cases.}
For Schwarzschild, RN, and K\"ottler metrics, explicit expressions for $r(\rho)$ are available; see the corresponding benchmark cases in \S\ref{sec:schwarzschild_example}, \S\ref{sec:rn_example}, and \S\ref{sec:Kottler_example}. Direct comparison of the truncated series with these benchmark formulas provides a stringent accuracy check and helps validate the error estimates in Table~\ref{tab:error_estimates}.

These tests provide an additional check that the truncated transformation is accurate enough for downstream numerical-relativity applications~\cite{loffler2012einstein,baumgarte2010numerical,alcubierre2008introduction}.
%===============================================================
%===============================================================

\section{Total Error Decomposition}\label{app:double_truncation_error}

The computation of a doubly truncated approximation $r^{(N,k_{\max})}(\rho)$ to $r(\rho)$ incurs three sources of error:
\begin{equation}
\left|r^{(N,k_{\max})}(\rho) - r(\rho)\right|
\leq
\varepsilon_{\text{I}}^{(k_{\max})} + \varepsilon_{\text{L}}^{(N)} + \varepsilon_{\text{coupling}}.
\label{eq:total_error_decomp}
\end{equation}
where\\

 $\varepsilon_{\text{I}}^{(k_{\max})}$ is the truncation error in the forward map $I(r)$,\\
 
 $\varepsilon_{\text{L}}^{(N)}$ is the truncation error in the Lagrange series (assuming exact $I(r)$),\\
 
 $\varepsilon_{\text{coupling}}$ accounts for interaction between the two approximations.

\paragraph{Forward map truncation error.}
Consider a fixed compact subinterval of $\mathcal{S}$ (or, more generally, a chosen region away from the endpoints) on which
\begin{equation}
\xi_{\max}:=\sup |\xi(r)|<1.
\end{equation}
On such a region, the binomial expansion of $(1-\xi(r))^{-1/2}$ converges absolutely and uniformly, so the corresponding multi-index representation of $I(r)$ may be truncated termwise. The resulting tail is
\begin{equation}
\varepsilon_{\text{I}}^{(k_{\max})}(r)
=
\left|
\sum_{|\mathbf m|>k_{\max}}
C(\mathbf m)\,r_g^{m_0}\prod_{i=1}^{N}K_i^{m_i}
\int_{r_+}^{r}\mathrm{d}x\,x^{-1-\alpha(\mathbf m)}
\right|.
\label{eq:forward_error_raw}
\end{equation}

\paragraph{Lagrange inverse truncation error.}
For the inverse series \eqref{eq:rSeries}, the truncation error estimates derived in
\S\ref{sec:InvConvergence} apply directly. In particular, the rigorous remainder bound is
given by Eq.~\eqref{eq:RemainderBound}, while, if the dominant obstruction to analytic continuation
is a single algebraic one as in \eqref{eq:AlgSingAnsatz}, the refined large-$N$ asymptotic estimate
is given by Eq.~\eqref{eq:ErrorAsymptotic}. Thus the inverse-series truncation error decays
geometrically in $N$, with an additional algebraic modulation in the algebraic-obstruction case.

\paragraph{Coupling error.}
The Lagrange coefficients $c_n$ depend on derivatives of $\rho(r)=\rho_0 r e^{I(r)}$. Replacing $I(r)$ by its truncation $I^{(k_{\max})}(r)$ therefore perturbs the coefficients as well. At the level of scaling on a chosen compact region with $|\xi(r)|\le \xi_{\max}<1$, one expects
\begin{equation}
c_n^{(k_{\max})}-c_n = O\!\left(\xi_{\max}^{k_{\max}+1}\right).
\end{equation}
Accordingly, for $|\rho-\rho_\ast|<R$, a schematic coupling estimate is
\begin{equation}
\varepsilon_{\text{coupling}}
\lesssim
C_3\,N\,\xi_{\max}^{k_{\max}+1}\,q^N,
\qquad
q:=\frac{|\rho-\rho_\ast|}{R}<1.
\label{eq:coupling_error}
\end{equation}
Thus, for fixed $k_{\max}$, increasing $N$ eventually ceases to improve the total error unless the forward truncation is refined as well.

\paragraph{Optimal parameter balance.}
To minimize the total error, one may balance the forward-tail scale controlled by $\xi_{\max}^{k_{\max}}$ against the inverse-series tail scale described by Eq.~\eqref{eq:ErrorAsymptotic}:
\begin{equation}
\xi_{\max}^{k_{\max}} \approx q^N \quad \Longrightarrow \quad N \approx k_{\max} \frac{\ln \xi_{\max}}{\ln q}.
\label{eq:optimal_balance}
\end{equation}
For typical values $\xi_{\max} \approx 0.3$ and $q \approx 0.5$, this yields $N \approx 1.7 \, k_{\max}$, suggesting that $N$ and $k_{\max}$ should be chosen in roughly equal proportion.

%===============================================================
%===============================================================

\section{Illustrative numerical estimates for specific examples}\label{app:error_estimates_examples}

We record schematic order-of-magnitude estimates for the Schwarzschild, RN, and K\"ottler solutions, intended only to illustrate the scaling relations discussed above.

\paragraph{Schwarzschild metric.}
For $f(r) = 1 - r_g/r$, we have $\xi(r) = r_g/r$. On the exterior region $r > r_+ = r_g$, typical evaluation points lie near $r \sim 3r_g$ (the photon sphere). At this radius,
\begin{equation}
\xi_{\max} \approx \frac{r_g}{3r_g} = 0.33.
\end{equation}
With the standard asymptotically flat normalization of the isotropic radius (see \S\ref{sec:schwarzschild_example}), the horizon is located at $\rho_+=r_g/4$, corresponding to $r_+=r_g$. In that normalization, the exact Schwarzschild inverse is
\begin{equation}
r(\rho)=\rho\left(1+\frac{r_g}{4\rho}\right)^2.
\end{equation}
the nearest singularity in the $\rho$-plane is at $\rho=0$. Thus, for the illustrative base point $\rho_\ast=r_g/2$, the convergence radius is
\begin{equation}
R=\frac{r_g}{2}.
\end{equation}
Taking, illustratively, $|\rho-\rho_\ast|\approx r_g/4$ then gives
\begin{equation}
q=\frac{|\rho-\rho_\ast|}{R}\approx 0.5.
\end{equation}
From \eqref{eq:optimal_balance}, setting $k_{\max} = 20$ suggests $N \approx 32$. The resulting errors are
\begin{align}
\varepsilon_{\text{I}}^{(20)} &\sim (0.33)^{20} \approx 2 \times 10^{-10}, \\
\varepsilon_{\text{L}}^{(32)} &\sim (0.5)^{32} \approx 2 \times 10^{-10},
\end{align}
yielding combined accuracy $\sim 10^{-9}$ to $10^{-10}$, sufficient for most numerical relativity applications.

\paragraph{Reissner--Nordstr\"om metric.}
For RN, with $f(r) = 1 - r_g/r + Q^2/r^2$,
\begin{equation*}
\xi(r)=\frac{r_g}{r}-\frac{Q^2}{r^2}.
\end{equation*}
At a representative radius $r=3r_g$, this gives
\begin{equation}
\xi(3r_g)=\frac{1}{3}-\frac{Q^2}{9r_g^2}.
\end{equation}
For example, if $Q^2=r_g^2/8$, then
\begin{equation}
\xi(3r_g)=\frac{1}{3}-\frac{1}{72}=\frac{23}{72}\approx 0.32.
\end{equation}
Thus one may take $\xi_{\max}$ to be of order $0.3$ in a representative exterior region. With a corresponding illustrative choice of $q\approx 0.5$, one finds, for example, that
\begin{equation}
\varepsilon_{\text{total}}
\sim
\left(\frac{23}{72}\right)^{20}+(0.5)^{30}
\approx 1.1\times 10^{-9},
\end{equation}
so that choosing $k_{\max}=20$ and $N=30$ gives a representative total error of order $10^{-9}$.

\paragraph{K\"ottler metric.}
For the K\"ottler metric,
\begin{equation*}
f(r)=1-\frac{r_g}{r}-\frac{\Lambda}{3}r^2,
\qquad
\xi(r)=\frac{r_g}{r}+\frac{\Lambda}{3}r^2.
\end{equation*}
Since the term $(\Lambda/3)r^2$ grows with $r$, the exterior static region is bounded and has the form $\mathcal{S}=(r_+,r_\uparrow)=(r_2,r_1)$. On this interval one has $|\xi(r)|<1$, with $\xi(r)\to 1$ as $r\to r_2^+$ and as $r\to r_1^-$. For illustration, when $\Lambda=0.01/r_g^2$, one may take a representative interior value $\xi_{\max}\approx 0.6$. With a corresponding illustrative choice $q\approx 0.5$, one obtains
\begin{equation}
\varepsilon_{\text{total}}
\sim
(0.6)^{30}+(0.5)^{20}
\approx 10^{-6},
\end{equation}
so that choosing $k_{\max}=30$ and $N=20$ gives a representative total error of order $10^{-6}$, which is typically sufficient for initial data construction.

Table~\ref{tab:error_estimates} summarizes the error scalings and optimal parameter choices for these three cases.

\begin{table}[ht]
\centering
\caption{Error estimates and optimal truncation parameters for representative spacetimes. The parameters $\xi_{\max}$ and $q$ are evaluated at typical radii in the static region; $k_{\max}$ and $N$ are chosen according to the balancing relation \eqref{eq:optimal_balance}.}
\label{tab:error_estimates}
\begin{tabular}{lccccc}
\hline\hline
Spacetime & $\xi_{\max}$ & $q$ & $k_{\max}$ & $N$ & $\varepsilon_{\text{total}}$ \\
\hline
Schwarzschild & $0.33$ & $0.5$ & $20$ & $32$ & $\sim 10^{-9}$ \\
RN ($Q^2 = r_g^2/8$) & $0.32$ & $0.5$ & $20$ & $30$ & $\sim 10^{-9}$ \\
K\"ottler ($\Lambda = 0.01/r_g^2$) & $0.6$ & $0.5$ & $30$ & $20$ & $\sim 10^{-6}$ \\
\hline\hline
\end{tabular}
\end{table}

%===============================================================
%===============================================================

\section{Pseudocode}\label{app:pseudocode}

This appendix summarizes a practical pipeline for (i) constructing the isotropic radius
$\rho(r)=\Psi(r)$ on the exterior static domain $\mathcal{S}=(r_+,r_\uparrow)$ and (ii) recovering
$r(\rho)$ either by a \emph{local} Lagrange-B\"urmann inverse (series reversion) or by \emph{global}
Newton root finding. All notation follows Secs.~\ref{sec:main}--\ref{SecLagInv}.

\subsection{Prerequisites and validity conditions}

Assume a static, spherically symmetric line element of the form
\begin{equation}\label{eq:app_pseudo_metric_consistent}
ds^2=-f(r)\,dt^2+f(r)^{-1}dr^2+r^2 d\Omega^2,
\end{equation}
and fix the exterior static domain
\begin{equation}\label{eq:app_pseudo_domain_consistent}
\mathcal{S}=(r_+,r_\uparrow),\qquad f(r)>0\ \text{for}\ r\in\mathcal{S},
\end{equation}
where $r_+$ is the event horizon and $r_\uparrow$ is either a cosmological horizon (if
present) or $+\infty$. On $\mathcal{S}$ the isotropic radius $\rho=\Psi(r)$ is defined by
\begin{equation}\label{eq:app_pseudo_ode_consistent}
\frac{d\rho}{dr}=\frac{\rho}{r\sqrt{f(r)}},\qquad r\in\mathcal{S},
\end{equation}
equivalently
\begin{equation}\label{eq:app_pseudo_log_deriv_consistent}
\frac{\Psi'(r)}{\Psi(r)}=g(r),\qquad g(r):=\frac{1}{r\sqrt{f(r)}}.
\end{equation}
Fix the lower integration limit at $r_+$ and a normalization constant $\rho_0>0$. Then
\begin{equation}\label{eq:app_pseudo_rho_integral_consistent}
\rho(r)=\Psi(r)=\rho_0\,r\,\exp\!\left(\int_{r_+}^{r}\frac{dx}{x}\Bigl(\frac{1}{\sqrt{f(x)}}-1\Bigr)\right),
\qquad r\in\mathcal{S}.
\end{equation}
Monotonicity (and hence real invertibility on $\mathcal{S}$) follows from
\begin{equation}\label{eq:app_pseudo_monotone_consistent}
\Psi'(r)=\frac{\Psi(r)}{r\sqrt{f(r)}}>0\qquad (r\in\mathcal{S},\ \Psi(r)>0).
\end{equation}

\paragraph{Series evaluation of the integral.}
Write $f(r)=1-\xi(r)$. On any subinterval $\mathcal{U}\subset\mathcal{S}$ where a branch is fixed so that $\sqrt{f}$ is analytic and
\begin{equation}
\sup_{x\in \mathcal{U}}|\xi(x)|<1,
\end{equation}
for the interval $\mathcal{U}$ under consideration, one may use
\begin{equation}\label{eq:app_pseudo_binomial_consistent}
\frac{1}{\sqrt{f(r)}}=\frac{1}{\sqrt{1-\xi(r)}}=\sum_{k=0}^{\infty}C_k\,\xi(r)^k,
\qquad
C_k=\frac{1}{4^k}\binom{2k}{k},
\end{equation}
together with a multinomial expansion of $\xi^k$, truncating at $k\leq k_{\max}$ as in Sec.~\ref{sec:main}. Near $r_+$, or whenever this condition fails on the required interval, evaluate the defining integral by quadrature (with endpoint handling at $r_+$ if it is a simple horizon).

%===============================================================
% Numbered, NON-floating, page-breakable "Algorithm" wrapper
\newcounter{algocounter}
\renewcommand{\thealgocounter}{\arabic{algocounter}}
\newenvironment{breakablealgo}[2]{%
  \par\addvspace{1ex}%
  \refstepcounter{algocounter}\label{#1}%
  \noindent\textbf{Algorithm~\thealgocounter.} #2\par\smallskip%
}{%
  \par\addvspace{1ex}%
}

%===============================================================
\begin{breakablealgo}{alg:forward-isotropic-consistent}{Forward isotropic map $\rho(r)=\Psi(r)$ on $\mathcal{S}$ (Sec.~\ref{sec:main}).}
\begin{algorithmic}[1]
\Statex \textbf{Input:}
\Statex \quad Metric function $f(r)$; static domain $\mathcal{S}=(r_+,r_\uparrow)$ with $f>0$ on $\mathcal{S}$;
\Statex \quad lower integration limit $r_+$; scale $\rho_0>0$ in \eqref{eq:app_pseudo_rho_integral_consistent};
\Statex \quad target radius $r\in\mathcal{S}$;
\Statex \quad \textbf{method} $\in\{\textsc{Quadrature},\textsc{Series}\}$;
\Statex \quad (series only) truncation $k_{\max}$ and the decomposition
\Statex \quad $\xi(r)=1-f(r)=\sum_{i=0}^{N}\xi_i(r)$, where
\Statex \quad $\xi_0(r)=\dfrac{r_g}{r},\quad \xi_i(r)=\dfrac{K_i}{r^{s_i}}\ \ (i=1,\dots,N),\quad s_i=3w_i+1, \quad w_i\neq 0$.
\Statex \textbf{Output:} $\rho(r)=\Psi(r)$ and (optionally) $\Phi(\rho)^2=r/\rho$.

\Procedure{IsotropicRadius}{$r$}
  \State \textbf{Assert} $r\in(r_+,r_\uparrow)$ and $f(x)>0$ for $x$ along the integration path from $r_+$ to $r$.
  \If{\textbf{method} = \textsc{Quadrature}}
    \State Compute $I(r)\gets \int_{r_+}^{r}\dfrac{dx}{x}\Bigl(\dfrac{1}{\sqrt{f(x)}}-1\Bigr)$ by adaptive quadrature.
    \Statex \Comment{If $r_+$ is a simple horizon, handle the integrable endpoint singularity via $x=r_+ + u^2$.}
    \State \Return $\Psi(r)\gets \rho_0\,r\,\exp(I(r))$.
  \ElsIf{\textbf{method} = \textsc{Series}}
        \State Define $\xi(x)\gets 1-f(x)$ and choose a subinterval $\mathcal{U}\subset\mathcal{S}$ appropriate for the series evaluation.
    \State Estimate $\xi_{\max}\gets \sup_{x\in \mathcal{U}}|\xi(x)|$.
    \If{$\xi_{\max}\geq 1$}
      \State \textbf{Fallback:} call \textsc{Quadrature} (series not guaranteed to converge on $\mathcal{U}$).
    \EndIf
    \State $I\gets 0$.
    \For{$k=1,2,\dots,k_{\max}$}
      \State $C_k\gets \binom{2k}{k}/4^k$.
           \ForAll{multi-indices $\mathbf m=(m_0,m_1,\dots,m_N)\in\mathbb{N}_0^{N+1}$ with $|\mathbf m|=\sum_{i=0}^{N}m_i=k$}
        \State $\alpha(\mathbf m)\gets m_0+\sum_{i=1}^{N} m_i\,s_i$.
        \State $W(\mathbf m)\gets C_k\cdot \dfrac{k!}{\prod_{i=0}^{N} m_i!}\cdot r_g^{m_0}\prod_{i=1}^{N} K_i^{m_i}$.
        \If{$\alpha(\mathbf m)\neq 0$}
          \State $J_{\alpha}\gets \int_{r_+}^{r} x^{-1-\alpha(\mathbf m)}\,dx
                =\dfrac{r_+^{-\alpha(\mathbf m)}-r^{-\alpha(\mathbf m)}}{\alpha(\mathbf m)}$.
        \Else
          \State $J_{\alpha}\gets \int_{r_+}^{r} x^{-1}\,dx=\log(r/r_+)$.
        \EndIf
        \State $I\gets I+W(\mathbf m)\,J_{\alpha}$.
      \EndFor
    \EndFor
    \State \Return $\Psi(r)\gets \rho_0\,r\,\exp(I)$.
  \EndIf
\EndProcedure
\end{algorithmic}
\end{breakablealgo}

%===============================================================
\begin{breakablealgo}{alg:inverse-lagrange-consistent}{Local inverse $r(\rho)$ by Lagrange-B\"urmann (series reversion) (Sec.~\ref{SecLagInv}).}
\begin{algorithmic}[1]
\Statex \textbf{Input:}
\Statex \quad $f(r)$ and routine \textsc{IsotropicRadius}$(r)$ (Algorithm~\ref{alg:forward-isotropic-consistent});
\Statex \quad local inversion expansion point $r_\ast\in\mathcal{S}$; truncation order $N$.
\Statex \textbf{Output:}
\Statex \quad $\rho_\ast=\Psi(r_\ast)$ and coefficients $\{c_n\}_{n=1}^{N}$ such that
\Statex \quad $r(\rho)\approx r_\ast+\sum_{n=1}^{N}c_n(\rho-\rho_\ast)^n$ for $|\rho-\rho_\ast|<R$.

\Procedure{LocalInverseLagrange}{$r_\ast,N$}
  \State \textbf{Assert} $r_\ast\in(r_+,r_\uparrow)$ and $f(r_\ast)>0$.
  \State $\rho_\ast\gets$ \textsc{IsotropicRadius}$(r_\ast)$.
  \State Define shifts: $x\gets r-r_\ast$, \; $y\gets \rho-\rho_\ast$.
  \State Expand $f(r_\ast+x)=\sum_{k=0}^{N} f_k x^k+O(x^{N+1})$.
  \State Compute $ \bigl(f(r_\ast+x)\bigr)^{-1/2}$ as a truncated series to order $N$
        (e.g.\ Newton iteration in the truncated series ring).
  \State Form $g(r_\ast+x)=\dfrac{1}{r_\ast+x}\,\bigl(f(r_\ast+x)\bigr)^{-1/2}=\sum_{k=0}^{N} g_k x^k+O(x^{N+1})$.
  \Statex \Comment{Solve $\rho'(x)=g(x)\rho(x)$ in series form, consistent with \eqref{eq:app_pseudo_ode_consistent}.}
  \State Write $\rho(r_\ast+x)=\sum_{k=0}^{N}\rho_k x^k$ with $\rho_0=\rho_\ast$.
  \For{$k=0,1,\dots,N-1$}
    \State $(k+1)\rho_{k+1}\gets \sum_{j=0}^{k} g_j\,\rho_{k-j}$.
  \EndFor
  \State Set $a_k\gets \rho_k$ for $k=1,\dots,N$ so that $y=A(x)=\sum_{k=1}^{N}a_k x^k+O(x^{N+1})$.
  \State \textbf{Assert} $a_1=\Psi'(r_\ast)=\rho_\ast\bigl(r_\ast\sqrt{f(r_\ast)}\bigr)^{-1}\neq 0$.
  \Statex \Comment{Series reversion: find $x=B(y)=\sum_{n=1}^{N}c_n y^n$ s.t.\ $A(B(y))\equiv y\ \bmod y^{N+1}$.}
  \State $c_1\gets 1/a_1$.
  \For{$n=2,3,\dots,N$}
    \State $B_{<n}(y)\gets \sum_{j=1}^{n-1}c_j y^j$.
    \State $T(y)\gets \sum_{k=2}^{n} a_k\,(B_{<n}(y))^k\ \bmod y^{n+1}$.
    \State $c_n\gets -\dfrac{[y^n]\,T(y)}{a_1}$.
  \EndFor
  \State \Return $(\rho_\ast,\{c_n\}_{n=1}^{N})$.
\EndProcedure
\end{algorithmic}
\end{breakablealgo}

%===============================================================
\begin{breakablealgo}{alg:inverse-newton-consistent}{Global inverse $r(\rho)$ by Newton–Raphson (Sec.~\ref{sec:ComparisonAlg}).}
\begin{algorithmic}[1]
\Statex \textbf{Input:}
\Statex \quad routine \textsc{IsotropicRadius}$(r)$ returning $\Psi(r)=\rho(r)$ (Algorithm~\ref{alg:forward-isotropic-consistent});
\Statex \quad metric function $f(r)$; target $\rho_{\mathrm{tgt}}\in\Psi(\mathcal{S})$;
\Statex \quad initial guess $r^{(0)}\in\mathcal{S}$; tolerance \textbf{tol}; max iterations $N_{\max}$.
\Statex \textbf{Output:} approximation $r(\rho_{\mathrm{tgt}})$.

\Procedure{NewtonInverse}{$\rho_{\mathrm{tgt}},r^{(0)},\textbf{tol},N_{\max}$}
  \State $r\gets r^{(0)}$.
  \For{$n=0,1,\dots,N_{\max}-1$}
    \State $\rho\gets$ \textsc{IsotropicRadius}$(r)$.
    \State $F\gets \rho-\rho_{\mathrm{tgt}}$.
    \If{$|F|<\textbf{tol}$}
      \State \Return $r$.
    \EndIf
    \State Compute $\Psi'(r)\gets \dfrac{\Psi(r)}{r\sqrt{f(r)}}=\dfrac{\rho}{r\sqrt{f(r)}}$ \Comment{from \eqref{eq:app_pseudo_ode_consistent}}
    \State Update $r\gets r-\dfrac{F}{\Psi'(r)}$.
  \EndFor
  \State \Return $r$ \Comment{last iterate if tolerance not reached}
\EndProcedure
\end{algorithmic}
\end{breakablealgo}

\begin{remark}
Algorithm~\ref{alg:inverse-lagrange-consistent} is local (valid for $|\rho-\rho_\ast|<R$),
while Algorithm~\ref{alg:inverse-newton-consistent} is global on $\mathcal{S}$ provided the
initial guess is chosen in $\mathcal{S}$ and lies sufficiently close to the desired solution.
In practice, one often uses the Lagrange series evaluation as a fast initializer for Newton when high accuracy is needed.
\end{remark}

%===============================================================
\bibliographystyle{apsrev4-2}
\bibliography{ref}

\end{document}